\renewcommand{\le}{\leqslant}
\renewcommand{\ge}{\geqslant}
\DeclareMathOperator*{\argmax}{argmax}
\newcommand{\id}{\mathbbmss 1}
\newcommand {\matl}{\left[ \begin{matrix}}
\newcommand {\matr}{\end{matrix}\right]}
\newcommand {\Exp}{ \mathbb E }
\renewcommand {\Pr}{ \mathbb P }
\newcommand {\Var}{\mathbf {Var}}
\newcommand{\cX}{\mathcal{X}}
\newcommand{\cY}{\mathcal{Y}}
\newcommand{\cF}{\mathcal{F}}
\newcommand{\cG}{\mathcal{G}}
\newcommand{\cM}{\mathcal{M}}
\newcommand{\ST}{\mathbb{T}}
\newcommand{\cH}{\mathcal{H}}
\newcommand{\cP}{\mathcal{P}}
\newcommand{\iid}{\overset{\mathrm{iid}}{\sim}}
\DeclareMathAlphabet{\mathbbmsl}{U}{bbm}{m}{sl}
\DeclareMathOperator{\rad}{Rad}
\newcommand{\Y}{\mathbf{Y}}
\newcommand{\fdr}{\mathsf{fdr}}
\title{Anytime-valid FDR control with the stopped e-BH procedure\footnote{To appear in the special issue ``E-values and Multiple Testing'' of \emph{Statistics \& Probability Letters}}
}
\author[1]{Hongjian Wang}
\author[1]{Sanjit Dandapanthula}
\author[2]{Aaditya Ramdas}
\affil[1, 2]{Department of Statistics and Data Science, Carnegie Mellon University}
\affil[2]{Machine Learning Department, Carnegie Mellon University} 
\affil[ ]{\texttt{ \{hjnwang,sanjitd,aramdas\}@cmu.edu  }}
\date{\today}
\newtheorem{theorem}{Theorem}[section]
\newtheorem{definition}[theorem]{Definition}
\newtheorem{proposition}[theorem]{Proposition}
\newtheorem{corollary}[theorem]{Corollary}
\newtheorem{lemma}[theorem]{Lemma}
\newtheorem{remark}[theorem]{Remark}
\newtheorem{example}[theorem]{Example}
\newtheorem{assumption}[theorem]{Assumption}
\begin{document}
\maketitle

\begin{abstract} The e-Benjamini-Hochberg (e-BH) procedure for multiple hypothesis testing is known to control the false discovery rate (FDR) under arbitrary dependence between the input e-values. This paper points out an important subtlety when applying e-BH to e-processes, the sequential counterparts of e-values: stopping multiple e-processes at a common stopping time only yields e-values if all the e-processes and the stopping time are with respect to the same global filtration.
    We show that this filtration issue is of real concern as e-processes are often constructed to be ``local'' as opposed to ``global''. We formulate a condition under which these local e-processes are indeed global and thus applying e-BH to their stopped values (the ``stopped e-BH procedure'') controls the FDR. The condition excludes
    confounding from the past and is met under most reasonable scenarios including genomics.
   \end{abstract}

\section{Introduction}


We consider the false discovery rate (FDR) control problem in sequential multiple hypothesis testing, where batches of data that corresponds to a fixed number of hypotheses arrive in sequence.
The Benjamini-Hochberg (BH) procedure \citep{benjamini1995controlling} is among the most widely used approaches to FDR control. However, the BH procedure operates directly on fixed-sample p-values, which are inherently non-sequential. Further, the BH procedure only controls the FDR under independence or certain assumptions on the ``positive correlation'' between hypotheses \citep{finner-prds}, and may fail if the hypotheses are arbitrarily correlated.

Recently, \cite{wang2022false} showed that, by applying BH to the reciprocals of \emph{e-values}, a procedure referred to as ``e-BH'', the FDR control holds under arbitrary dependence across hypotheses. Further, since e-values arise naturally in sequential experiments, e-BH opens up the possibility of sequential multiple testing. The sequential version of e-BH was later spelled out by \cite{xu2021unified}, whose result allows bandit-like active queries into the hypotheses, each carrying an e-process (to be defined formally later), and the FDR is controlled at any \emph{stopping time} under arbitrary dependence. 

This seemingly very satisfactory result by \cite{xu2021unified}, however, bears a very crucial caveat. It is assumed that the e-processes of the hypotheses are valid under a shared ``global'' filtration. This indicates that one can \emph{not},  without further verification, 
simply apply e-BH to e-processes each constructed ``locally'' \emph{within} the hypotheses at a shared stopping time.

In this paper, we first formally define the local-global distinction of e-processes, and identify some further conditions under which e-processes constructed within hypotheses using generic methods {are not only local, but also} global e-processes. 
Additionally, we define the more generalized notion of global compound e-processes, and show that e-BH with these processes is \emph{necessary and sufficient} to obtain stopped FDR control.

\section{Background and problem set-up}

Throughout the paper, we shall liberally adopt the notations $\Exp_P, \Exp_{\theta}$ etc.\ to denote the expected value of some random variable when the underlying data-generating distribution is $P$, or when the underlying parameter of interest takes the value of $\theta$.

\subsection{Review: multiple testing with e-values and e-BH}\label{sec:mt}

 A nonnegative random variable $E$ is called an e-value for the set of distributions $\cP$, if $\sup_{P\in\cP}\Exp_{P}E \le 1$. To test the null hypothesis ``the true distribution $P^* \in \cP$'', one constructs an e-value $E$ for $\cP$ and rejects if $E \ge 1/\alpha$. This controls the type 1 error rate due to Markov's inequality.

In the multiple testing set-up, {let $\cM$ be the}  set of all possible distributions, {and let} $\cP^1, \dots, \cP^G \subseteq \cM$ {be $G$ null sets}. {Let} $[G]$ {be} the set $\{ 
1,\dots, G \}$. {Let $P^*$ be the unknown true} distribution {which belongs to $\cM$. It}
may {belong to arbitrary number of null sets among $(\cP^g : g\in [G])$}.
The $g$\textsuperscript{th} null hypothesis $\cH_0^g$ is true if rd{$P^* \in \cP^g$}. For any random set $R \subseteq [G]$, we denote its false discovery rate (FDR) under {some $P \in \cM$} by 
\begin{equation}
   \fdr_P[R] := \Exp_P \left[\frac{\sum_{g=1}^G \id\{  P \in \cP^g  \} \cdot  \id\{ g \in R  \}   }{ 1 \vee \sum_{g=1}^G \id\{ g \in R  \}  } \right].
\end{equation}
{Let $E_{(h)}$ be the $h$\textsuperscript{th} largest elements among $(E_g : g\in [G])$. The e-BH procedure \citep{wang2022false} at level $\alpha$ is defined as}
\begin{equation*}
     \mathsf{eBH}_{\alpha}(E_g : g\in[G]) = \{ g: E_g \ge E_{(g^*)} \} \quad \text{where }
    g^* =  \max\left\{ g \in [G] : \frac{g E_{(g)}}{G} \ge \frac{1}{\alpha} \right\},
\end{equation*}
and satisfies \citep[Proposition 2]{wang2022false}:
\begin{lemma}[FDR control of e-BH]\label{lem:fdr-ebh} 
For each $g\in [G]$, suppose $E_g$ is an e-value for $\cP^g$. Then $\fdr_P[\mathsf{eBH}_{\alpha}(E_{g} : g \in [G]) ]  \le \alpha$ for any {$P \in \cM$}. 
\end{lemma}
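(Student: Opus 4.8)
The plan is to run the standard ``self-consistency plus linearity'' argument for e-BH, whose whole point is that it never touches the joint law of $(E_g)_{g\in[G]}$ and is therefore immune to arbitrary dependence. Write $R := \mathsf{eBH}_{\alpha}(E_g : g\in[G])$ for the rejection set, $k := |R|$ for the number of rejections, and $T := \{g\in[G] : P\in\cP^g\}$ for the (deterministic, given $P$) set of true null indices, so that
\[
\fdr_P[R] = \Exp_P\!\left[\sum_{g\in T}\frac{\id\{g\in R\}}{1\vee k}\right].
\]

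First I would establish the threshold property: every rejected index satisfies $E_g \ge G/(\alpha k)$. If the set $\{g : gE_{(g)}/G \ge 1/\alpha\}$ is empty, then $R=\emptyset$, $\fdr_P[R]=0$, and there is nothing to prove; otherwise $g^*$ is well defined, every $g\in R$ has $E_g \ge E_{(g^*)}$, and since $E_{(1)}\ge\cdots\ge E_{(g^*)}$ all clear the bar $E_{(g^*)}$ we get $k = |R| \ge g^*$. Combining $E_{(g^*)} \ge G/(\alpha g^*)$ (immediate from the definition of $g^*$) with $k \ge g^*$ yields $E_g \ge G/(\alpha g^*) \ge G/(\alpha k)$ for each $g\in R$.

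Next comes a pointwise bound on the false-discovery summand: for any fixed $g\in[G]$,
\[
\frac{\id\{g\in R\}}{1\vee k} \le \frac{\alpha E_g}{G}.
\]
Indeed, if $g\notin R$ the left side is $0$ while $E_g\ge0$; if $g\in R$ then $k\ge1$, the left side equals $1/k$, and the threshold property gives $1/k \le \alpha E_g/G$. Summing over $g\in T$, taking $\Exp_P$, and invoking linearity of expectation — the one and only place where dependence could conceivably intrude, and which linearity sidesteps entirely — gives $\fdr_P[R] \le \frac{\alpha}{G}\sum_{g\in T}\Exp_P[E_g]$. Finally, for each $g\in T$ we have $P\in\cP^g$, so $E_g$ being an e-value for $\cP^g$ forces $\Exp_P[E_g]\le1$; hence $\fdr_P[R] \le \alpha|T|/G \le \alpha$, as claimed.

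I expect the only mildly fiddly point to be the bookkeeping in the threshold step: handling the empty-set / no-rejection case, and checking that ties (which can make $k$ strictly larger than $g^*$) only help. Everything after $E_g \ge G/(\alpha k)$ is a two-line computation driven purely by linearity of expectation and the e-value property, with no probabilistic assumption whatsoever on how the $E_g$ relate to one another — which is exactly why the conclusion holds for all $P\in\cM$.
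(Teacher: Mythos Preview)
Your proof is correct and is precisely the standard self-consistency argument for e-BH due to Wang and Ramdas (2022), which the paper simply cites (as \citet[Proposition~2]{wang2022false}) rather than reproving. The threshold step, the pointwise bound $\id\{g\in R\}/(1\vee k)\le \alpha E_g/G$, and the concluding use of linearity plus the e-value property are exactly the ingredients of that original proof, so there is nothing to distinguish methodologically.
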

A generalization with ``compound e-values'' and a converse to \cref{lem:fdr-ebh} are discussed in \ref{app:comp}. We also remark that, in \cref{lem:fdr-ebh} and its downstream statements, we can replace $\mathsf{eBH}_{\alpha}$ with the ``closed e-BH'' procedure recently proposed by \cite{xu2025bringing}, which improves the power while retaining the FDR control.

\subsection{Sequential multiple testing, filtrations, and the stopped e-BH}\label{sec:seq}

We now introduce the sequential experiment setting and let us first review the terminology with a single hypothesis. Let $ \{ \cF_n \}_{n \ge 0}$ be a filtration. An  $ \{ \cF_n \}$-adapted nonnegative process $\{ M_n \}_{n \ge 1}$ is called an e-process \emph{for $\cP$ on $\{ \cF_n \}$} if either of the following two equivalent conditions holds:
\begin{itemize}
    \item At any $\{ \cF_n \}$-stopping time $\tau$, $M_\tau$ is an e-value for $\cP$.
    \item For any $P \in \cP$, there is a process $\{ N^P_n \}$ such that $N^P_0 = 1$, it is a nonnegative supermartingale under $P$, and $N^P_n \ge M_n$ for all $n$.
\end{itemize}
Therefore, one can stop at any $\{ \cF_n \}$-stopping time $\tau$ and safely reject the null $\cP$ if the stopped e-value $M_\tau \ge 1/\alpha$. In particular, $\tau$ can be $\min\{n : M_n \ge 1/\alpha \}$. 

With \cref{lem:fdr-ebh}, it is tempting to consider the following sequential multiple testing scheme with e-processes: since ``the stopped value of an e-process is an e-value'', if one runs an e-process for each hypothesis and computes e-BH with the stopped values of e-processes, does one achieve the sequential e-BH control at any stopping time?
Here, however, lies the subtlety of the problem: \emph{filtrations}. Let us carefully define the sequential problem with multiple hypotheses. 

We assume $G$ null sets of distributions $\cP^1,\dots, \cP^G {\subseteq \cM}$ as before, and that the $g$\textsuperscript{th} null $\cH_0^g$ states that the underlying {$P^*$ belongs to $\cP^g$}.
Further, for each $g \in [G]$,
 there is a \emph{local} filtration $\{ \cF^g_n \}_{n\ge 0}$ generated from data for the hypothesis $\cH^g_0$. 
 The combination of these $G$ local filtrations gives rise to the \emph{global} filtration $\cF_n = \sigma( \cF_n^g : g \in [G] )$,
 representing all information gathered up to $n$ observations from all hypotheses. 
 To sequentially test each hypothesis $\cH^g_0$, one naturally utilizes data on $\{\cF^g_n\}$ and constructs an e-processes on it. Formally, we refer to an e-process or a stopping time on $\{\cF_n\}$ as a \emph{global} one; and on $\{ \cF_n^g \}$, a ($g$-)\emph{local} one.
 
 The following coin-toss example illustrates this local-global distinction, and the alerting fact that per-hypothesis e-processes are by default local, not global. We shall revisit this model later. Let $\rad(p)$ be the Rademacher distribution with mass $p$ and $1-p$ on $\pm 1$ respectively.
 \begin{example}[Multiple coin tosses]\label{ex:coin}
     For each $g \in [G]$, there is a stream $\{Y_n^g\} \iid \operatorname{Rad}(\theta^g)$ leading to a local filtration $\cF^g_n = \sigma( Y_i^g,\dots, Y_n^g )$. 
     The process $
         M_n^g = \prod_{i=1}^n \left( 1 + Y^g_i / 2 \right) $
     is an e-process on the local filtration $\{\cF^g_n\}$ for the null $\theta^g = 1/2$. Without further assumptions, there is \underline{no guanratee} that $ \{M_n^g \}$ is an e-process on the global filtration $\cF_n = \sigma(  \cF_n^g : g \in [G] )$.
 \end{example}

The fact that $\{M_n^g \}$ is a local e-process can be easily seen by noting $\Exp_{\theta^g=1/2}[1 + Y^g_i / 2  | \cF_{n-1}^g ] = 1$.
This simple example already alarms that it is only safe to stop these $G$ per-hypothesis e-processes at local stopping times. We shall later explicitly spell out an example where $G = 2$ and the local e-process $\{ M_n^1 \}$ is not an e-process on the global filtration $\{\cF_n\}$, in \ref{sec:counterexample}. We also discuss in \ref{app:ambient} a deceptively promising variation of \cref{ex:coin} by changing the ``ambient filtration'' in each local e-process.

The concept of local and global filtrations and e-processes leads to that of the stopped FDR control, which is only defined globally. {We denote the set of all stopping times on $\{\cF_n \}$ by $\ST$.}

\begin{definition}\label{def:stopped-fdr}
    A sequence of random sets $\{ R_n \}$ (each $R_n \subseteq [G]$) adapted to $\{\cF_{n}\}$ satisfies the \emph{level-$\alpha$ stopped FDR control} if
    $\sup_{ \substack{ P \in \cM \\ \tau \in \ST }}\fdr_P [R_{\tau}] \le \alpha$.
   
\end{definition}

Our definitions lead to the following procedure which we call the \emph{stopped e-BH}. Its FDR ganruantee is a
direct corollary of \cref{lem:fdr-ebh}.

\begin{theorem}[Stopped e-BH]\label{thm:stop-ebh}
    Let $(\{ M^g_n \}: g\in [G])$ be global e-processes. Then, the set process $\{ \mathsf{eBH}_{\alpha}(M^g_n: g\in [G]  )\}$  satisfies the {level-$\alpha$ stopped FDR control}.
\end{theorem}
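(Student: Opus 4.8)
The plan is to obtain \cref{thm:stop-ebh} as an essentially immediate corollary of \cref{lem:fdr-ebh}, with the only bridge being the defining (optional-stopping) property of a \emph{global} e-process: evaluating it at a global stopping time produces an honest e-value. So the whole argument is to move the stopping time inside the e-BH map, recognize the result as a vector of $G$ e-values, and quote \cref{lem:fdr-ebh}.

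Concretely, I would proceed as follows. First, fix an arbitrary $P \in \cM$ and an arbitrary $\tau \in \ST$, i.e.\ a stopping time with respect to the global filtration $\{\cF_n\}$. Since each $\{M^g_n\}$ is by hypothesis an e-process for $\cP^g$ on $\{\cF_n\}$, the first bullet in the definition of an e-process gives directly that $M^g_\tau$ is an e-value for $\cP^g$, for every $g \in [G]$. Second, I would note that $\mathsf{eBH}_\alpha$ is a fixed, deterministic, Borel-measurable map from $\reals_{\geq 0}^G$ to $2^{[G]}$ of its $G$ input values; hence on every sample path the rejection set $R_n = \mathsf{eBH}_\alpha(M^g_n : g\in[G])$ evaluated at the (random) index $\tau$ satisfies $R_\tau = \mathsf{eBH}_\alpha(M^g_\tau : g\in[G])$ — stopping commutes with applying the procedure. (The same measurability observation shows that $\{R_n\}$ is $\{\cF_n\}$-adapted, since each $M^g_n$ is $\cF_n$-measurable, so that \cref{def:stopped-fdr} indeed applies to $\{R_n\}$.) Third, I apply \cref{lem:fdr-ebh} to the $G$ e-values $(M^g_\tau : g\in[G])$, obtaining $\fdr_P[R_\tau] = \fdr_P[\mathsf{eBH}_\alpha(M^g_\tau : g\in[G])] \le \alpha$. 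Finally, since $P$ and $\tau$ were arbitrary, taking the supremum over $P \in \cM$ and $\tau \in \ST$ yields exactly the level-$\alpha$ stopped FDR control of \cref{def:stopped-fdr}.

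I do not anticipate a genuine obstacle: the argument is bookkeeping, and all of the mathematical content has been shifted into the hypothesis that the $\{M^g_n\}$ are \emph{global} e-processes, which is precisely the delicate point the rest of the paper addresses. The only two places that merit a careful word are the pathwise identity $R_\tau = \mathsf{eBH}_\alpha(M^g_\tau : g\in[G])$ — needed so that \cref{lem:fdr-ebh} can legitimately be invoked on the realized stopped values rather than on the process — and, if one wishes to be fully scrupulous, the convention for $M^g_\tau$ on the event $\{\tau = \infty\}$, which is subsumed by the standard definition of an e-process at an arbitrary (possibly infinite) stopping time and therefore requires no separate treatment.
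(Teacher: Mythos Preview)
Your proposal is correct and matches the paper's approach exactly: the paper states that \cref{thm:stop-ebh} is ``a direct corollary of \cref{lem:fdr-ebh}'' without spelling out any further details, and your argument is precisely the natural unpacking of that remark. The extra care you take with the pathwise identity $R_\tau = \mathsf{eBH}_\alpha(M^g_\tau : g\in[G])$ and adaptedness is more than the paper provides, but entirely in line with its intent.
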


We shall demonstrate in \ref{app:comp} a converse to \cref{thm:stop-ebh}: {a procedure controls the stopped FDR if and only if it is the stopped e-BH procedure with global ``compound'' e-processes}. It is worth remarking here that \cref{def:stopped-fdr} and \cref{thm:stop-ebh} arise since we desire stopping at a global stopping time $\tau \in \ST$. Indeed, it is correct that if one wishes to stop at a $g$-local stopping time $\tau^g$ for \emph{each} hypothesis $\cH^g_0$, one can work with local e-processes: the stopped local e-process values $(M^g_{\tau^g} : g \in [G])$ are e-values to which one may safely apply e-BH.  This, however, disallows stopping any stream after observing the \emph{output} of e-BH, e.g.\ stopping {after rejecting} a certain subset of hypotheses. Optional stopping contingent on the \emph{final output} is arguably a \emph{much} stronger and more preferable form of selective inference than stopping contingent on some \emph{intermediate} (e.g.\ ``local'' in this case) statistics.

\cref{thm:stop-ebh} indicates that global e-processes lead to stopped FDR control. In fact, we shall establish in \ref{app:comp} the \emph{necessity} of globality for stopped FDR control.
One can show that if the local filtrations are independent, then local e-processes are global e-processes (\ref{app:indp}). However, this rarely happens in reality as it is common to have cross-hypothesis dependence. As \cref{ex:coin} demonstrates, we usually end up constructing local e-processes with no globality guarantee. Therein lies the critical gap between single and multiple hypothesis sequential testing which the upcoming sections aim to address: \emph{global e-processes are necessary for sequential multiple testing, however, we only know how to construct local e-processes.}

\cite{xu2021unified} are aware of this issue, and in their formulation of bandit e-BH framework, they indeed assume that all e-processes are e-processes with respect to a shared global filtration. The same uncareful globality assumption is also made recently by \cite{tavyrikov2025carefree}. It still remains to be answered under what circumstances one may construct e-processes locally within a hypothesis using any of the many existing e-processes in the literature and still enjoy the global property to stop them at a single cross-hypothesis stopping time. 

\section{Global e-processes by a Markovian assumption}\label{sec:mark}

We show in this section that a conditional independence assumption that resembles Markov chains solves the aforementioned filtration issue.
We formulate the sequential multiple testing problem with the following set-up. Consider at time $n=1,2,\dots$, we observe the $n$\textsuperscript{th} observation $Z_n = (X_n, \Y_n)$ that includes the covariate $X_n \in \cX$ and an array of response variables $\Y_n = (Y_n^1,\dots, Y_n^G) \in \cY^1 \times \dots \times \cY^G$. 

We assume that the covariate-response pairs $\{(X_n, \Y_n)\}_{n \ge 1}$ follow a \emph{time-homogeneous marginal conditional model} on $\Y | X$:
\begin{equation}\label{eqn:model}
    Y^g | X \sim p_g( Y^g | X, \theta ), \quad \text{for all } g \in [G],
\end{equation}
which is completely specified by a parameter $\theta \in \Theta$. Here, $p_g(\cdot| x, \theta )$ is a distribution over $\cY^g$ for any $(x, \theta)\in \cX \times \Theta$. That is, denoting the ground truth parameter by $\theta^* \in \Theta$, for any $n \ge 1$ and $g \in [G]$, the conditional distribution of $Y^g_n$ given $X_n$ is $p_g(\cdot| X_n, \theta^* )$.

The model \eqref{eqn:model} is ``marginal'' in the sense that it describes only the marginal distribution of each component $Y^g_n$ among $\Y_n$, given $X_n$. This allows for arbitrary dependence between $Y_n^1, \dots, Y_n^G$ given $X_n$. We further impose the following causal assumption, which becomes crucial in our later discussion on e-processes.



\begin{assumption}\label{asn:causal}
   For any $n \ge 2$, $\Y_{n} \perp ( X_1, \dots, X_{n-1}; \Y_1, \dots, \Y_{n-1}) | X_{n}$.
\end{assumption}
That is, the conditional distribution of $\Y_n | X_n$, whose $G$ marginals are specified by the model \eqref{eqn:model}, is also the conditional distribution of 
\begin{equation}
\Y_n | (X_1, \dots, X_{n}; \Y_1, \dots, \Y_{n-1}).    
\end{equation}
We again stress that
given the covariate $X_n$, there is allowed an arbitrary dependence \emph{between} the $G$ response variables $Y_n^1, \dots, Y_n^G$; but the array of these response variables \emph{as a whole}, $\Y_n$, is independent from everything previously collected. The response $\Y_n$ depends on the past only through its covariate $X_n$, a property akin to that of a Markov chain. In fact, \cref{asn:causal} states that the sequence $
    \{X_1, \Y_1, X_2 ,\Y_2,  X_3 ,\Y_3 \dots \}$
is Markovian at the $\Y_n$'s.
Further, \cref{asn:causal} poses no limitation on the covariate $X_n$. It can be random, deterministic, chosen or sampled adaptively depending on previous observations and inference results. 


We now introduce the $G$ null hypotheses into the model. Consider $G$ subsets of the parameter set, $\Theta_0^g \subseteq \Theta$ for each $g \in [G]$.
The null hypothesis $\cH_0^g$ is true if the ground truth parameter $\theta^* \in \Theta_0^g$.  Recall that in \cref{sec:mt} we used the notation $\cM$ to denote the set of all distributions in the model, and $\cP^g$ all null distributions satisfying the $g$\textsuperscript{th} null hypothesis. The following remark clarifies the consistency from using the previous $\cM$ and $\cP^g$ notations to using the marginal conditional model \eqref{eqn:model}.

\begin{remark} $\cM$ contains all distributions of $\{Z_n\}_{n \ge 1}$ such that (1) \cref{asn:causal} holds; (2) there exists a $\theta^* \in \Theta$ such that, for all $n \ge 1$ and $g \in [G]$, $p_g(\cdot|\cdot, \theta^*)$ is the conditional distribution of 
        $Y_n^g | X_n $. $\cP^g$ contains all such distributions where the $\theta^*$ above can be chosen from $\Theta^g_0$.
\end{remark}
We further remark that,
   while the covariate-response  $(X_n, \Y_n)$ formulation suggests a regression-like set-up where all hypothesis streams receive in synchrony at time $n$ a common covariate $X_n$, our formulation allows more general set-ups. We can remove $X_n$ by simply taking $X_n$ to be a non-random quantity, thus allowing non-regression settings, which include the coin-toss \cref{ex:coin}. The topic of asynchrony is discussed separately in \ref{app:asynch}.

The global filtration we work with differs slightly from the obvious choice $\{\sigma( Z_i :  1 \le  i \le n) \}$ 
but is instead defined as the ``look-ahead'' filtration
 \begin{equation}
         \cF_n = \sigma(   \Y_i, X_j :  1 \le i \le n, 1 \le j \le n+1  ). 
\end{equation}
That is, $\cF_n$ includes all the information after the ($n$+1)\textsuperscript{st} covariate is available but before the ($n$+1)\textsuperscript{st} response variables are revealed. This filtration is one covariate finer than the ``natural'' filtration $ \cG_n = \sigma(   Z_i :  1 \le i \le n)$
which shall also appear in our upcoming theorem.
With these definitions and assumptions above, we establish the following theorem stating that locally defined e-processes, constructed multiplicatively from \emph{stepwise e-values}, can be lifted to this global filtration.

\begin{theorem}\label{thm:glbl}
    Let $g \in [G]$. Suppose for each $n \ge 1$, there is a $\cG_{n-1}$-measurable random function $E_n^g:\Theta^g_0 \times \cX\times \cY^g \to \mathbb R_{\ge 0} $ that satisfies
    \begin{equation}\label{eqn:e-val-property}
      \sup_{\substack{\theta \in \Theta^g_0 \\ x \in \cX} } \int_{\cY^g} E_n^g(\theta , x, y)p_g(dy|x, \theta) \le 1.
    \end{equation}
    Then, under \cref{asn:causal}, under any $\theta^* \in \Theta_0^g$, the process
    \begin{equation}\label{eqn:mtg-theta}
      M_n^g(\theta^*) =   \prod_{i=1}^n E_i^g(\theta^*,X_i, Y_i^g)
    \end{equation}
    is a nonnegative supermartingale on $\{ \cF_n \}$. 
    Consequently, the process
    \begin{equation}\label{eqn:ug}
        U_n^g = \inf_{\theta \in \Theta^g_0} M_n^g(\theta) 
    \end{equation}
    is an e-process on $\{ \cF_n \}$ under the null  $\cH^g_0: \theta^* \in \Theta_0^g$.
\end{theorem}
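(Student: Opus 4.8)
The plan is to show that $\{M_n^g(\theta^*)\}$ is a nonnegative supermartingale on the look-ahead filtration $\{\cF_n\}$; the e-process claim for $\{U_n^g\}$ then follows immediately from the second characterization of e-processes recalled in \cref{sec:seq}, since $U_n^g \le M_n^g(\theta^*)$ pointwise for the particular dominating supermartingale indexed by the true $\theta^*$, and $M_0^g(\theta^*) = 1$ (empty product). So the whole content is the supermartingale property.

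First I would check adaptedness: $M_n^g(\theta^*)$ is a function of $\{X_i, Y_i^g : 1 \le i \le n\}$, which is $\cF_n$-measurable (indeed even $\cG_n$-measurable), so this is routine. The heart of the matter is the conditional expectation step. Writing $M_n^g(\theta^*) = M_{n-1}^g(\theta^*)\cdot E_n^g(\theta^*, X_n, Y_n^g)$, I would condition on $\cF_{n-1}$ and note that $M_{n-1}^g(\theta^*)$ is $\cF_{n-1}$-measurable, so it pulls out; it remains to show $\Exp_{\theta^*}[E_n^g(\theta^*, X_n, Y_n^g) \mid \cF_{n-1}] \le 1$. Now $\cF_{n-1} = \sigma(\Y_i, X_j : 1\le i \le n-1,\ 1 \le j \le n)$, so $X_n$ \emph{is} $\cF_{n-1}$-measurable — this is exactly why the look-ahead filtration is used. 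I would further condition on $X_n$: by the tower property it suffices to bound $\Exp_{\theta^*}[E_n^g(\theta^*, X_n, Y_n^g)\mid \cF_{n-1}, X_n] = \Exp_{\theta^*}[E_n^g(\theta^*, X_n, Y_n^g)\mid \cF_{n-1}]$ (the added $X_n$ is redundant). The key is that $E_n^g$ is $\cG_{n-1}$-measurable, hence $\cF_{n-1}$-measurable, and \cref{asn:causal} tells us that the conditional law of $Y_n^g$ given $\cF_{n-1}$ (equivalently given $(X_1,\dots,X_n;\Y_1,\dots,\Y_{n-1})$, which generates $\cF_{n-1}$) is $p_g(\cdot \mid X_n, \theta^*)$ — here I use that \cref{asn:causal} gives $\Y_n \perp (X_1,\dots,X_{n-1};\Y_1,\dots,\Y_{n-1}) \mid X_n$, and the marginal conditional model \eqref{eqn:model} identifies the $g$-marginal of that conditional law as $p_g(\cdot\mid X_n,\theta^*)$. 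Therefore, treating the $\cF_{n-1}$-measurable quantities $\theta^*, X_n$, and the realized function $E_n^g$ as fixed, the conditional expectation equals $\int_{\cY^g} E_n^g(\theta^*, X_n, y)\, p_g(dy \mid X_n, \theta^*)$, which is $\le 1$ by hypothesis \eqref{eqn:e-val-property} since $\theta^* \in \Theta_0^g$. Chaining this back gives $\Exp_{\theta^*}[M_n^g(\theta^*)\mid \cF_{n-1}] \le M_{n-1}^g(\theta^*)$, and nonnegativity is clear from $E_n^g \ge 0$.

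The main obstacle — really the only subtle point — is the careful bookkeeping of \emph{which} $\sigma$-algebra makes $X_n$ and $E_n^g$ measurable, and justifying the "freeze the $\cF_{n-1}$-measurable arguments and integrate only in $y$" manipulation rigorously. This requires the standard measure-theoretic fact that for an $\cF_{n-1}$-measurable random function $y \mapsto E_n^g(\theta^*, X_n, y)$ and a regular conditional distribution of $Y_n^g$ given $\cF_{n-1}$, one has $\Exp[E_n^g(\theta^*,X_n,Y_n^g)\mid\cF_{n-1}] = \int E_n^g(\theta^*,X_n,y)\,\kappa(dy)$ where $\kappa$ is that conditional distribution; I would invoke this (a disintegration / conditional Fubini argument) rather than reprove it. One should also remark that \cref{asn:causal} is stated for $n \ge 2$, but the $n=1$ case of the supermartingale inequality, $\Exp_{\theta^*}[E_1^g(\theta^*,X_1,Y_1^g)\mid \cF_0] \le 1$, holds directly: $\cF_0 = \sigma(X_1)$, $E_1^g$ is $\cG_0$-measurable (a constant function, or at least $\sigma(X_1)$-compatible per the convention), and the $g$-marginal of $Y_1^g\mid X_1$ is $p_g(\cdot\mid X_1,\theta^*)$ by \eqref{eqn:model}, so \eqref{eqn:e-val-property} again applies. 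Everything else is a one-line application of the tower property and the definition of a supermartingale.
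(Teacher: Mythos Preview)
Your proposal is correct and follows the paper's proof essentially line for line: both reduce to $\Exp_{\theta^*}[E_n^g(\theta^*, X_n, Y_n^g) \mid \cF_{n-1}] \le 1$ by observing that $X_n$ and the random function $E_n^g$ are $\cF_{n-1}$-measurable (via the look-ahead filtration and $\cG_{n-1} \subseteq \cF_{n-1}$), invoking \cref{asn:causal} to identify the conditional law of $Y_n^g$ given $\cF_{n-1}$ as $p_g(\cdot \mid X_n, \theta^*)$, and applying \eqref{eqn:e-val-property}. The paper encodes your ``freeze the measurable arguments and integrate in $y$'' disintegration step by passing to a deterministic representative $f_E$ of the $\cG_{n-1}$-measurable random function $E_n^g$; your extra remarks on the $n=1$ base case and on deducing the e-process claim via the dominating-supermartingale characterization are fine additions that the paper leaves implicit.
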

The proof of the theorem above can be found in \ref{sec:pf-glbl}.
We remark that in the theorem above, the local e-process that tests $\cH^g_0$ is computed via the infimum over a $\cH^g_0$-indexed family of nonnegative supermartingales, each incrementally updated by a stepwise e-value $E_n^g(\theta, X_n, Y_n^g)$ at time $n$ that only takes $X_n$ and $Y_n^g$. The assumption that $E_n^g$ being $\cG_{n-1}$-measurable is only for the sake of generality. In a purely local construction we often let $E_n^g$ be $\sigma(  X_i, Y_i^g : i \le n-1 )$-measurable without looking into other streams, which of course implies $\cG_{n-1}$-measurability.
Recalling the second equivalent \emph{definition} of e-processes in the opening of \cref{sec:seq}, this procedure is universal as it encompasses every possible local construction of e-processes. \cref{thm:glbl} then states that these local e-processes are global.

Additionally, $\{U_n^g\}$ are also e-processes on the natural global filtration $\{ \cG_n \}$ by a simple tower property argument. However, our statement that these are e-processes on $\{ \cF_n \}$ allows stopping decisions after peeking into the upcoming covariates $X_{n+1}$, allowing more flexibility in practice.

Combining with \cref{thm:stop-ebh}, we see that these global e-processes stopped at a global stopping time can produce FDR-controlled multiple testing results with e-BH:

\begin{corollary}
     The set process $\{ \mathsf{eBH}_{\alpha}(U^g_n: g\in [G]  )\}$, where $\{ U^g_n \}$'s are obtained from \eqref{eqn:ug}, satisfies the {level-$\alpha$ stopped FDR control} on $\{ \cF_n \}$. 
\end{corollary}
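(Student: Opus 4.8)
The plan is to combine the two results that have just been assembled: \cref{thm:glbl} shows that each $\{U_n^g\}$ is an e-process on the common global filtration $\{\cF_n\}$ (in the sense of the second equivalent definition from \cref{sec:seq}, being dominated by the nonnegative supermartingale $M_n^g(\theta^*)$ under any $\theta^*\in\Theta_0^g$), and \cref{thm:stop-ebh} says that running $\mathsf{eBH}_\alpha$ on global e-processes yields the level-$\alpha$ stopped FDR control. So the proof is essentially a one-line invocation: apply \cref{thm:stop-ebh} with the particular global e-processes $\{U_n^g\}$ furnished by \cref{thm:glbl}.

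Concretely, I would first observe that all $G$ processes $\{U_n^1\},\dots,\{U_n^G\}$ are adapted to the \emph{same} filtration $\{\cF_n\}$ and that, by \cref{thm:glbl}, each $\{U_n^g\}$ is an e-process for $\cP^g$ on $\{\cF_n\}$ under \cref{asn:causal}. I would then note that the set process $\{R_n\} = \{\mathsf{eBH}_\alpha(U_n^g:g\in[G])\}$ is $\{\cF_n\}$-adapted because $U_n^g$ is $\cF_n$-measurable and $\mathsf{eBH}_\alpha$ is a deterministic (measurable) function of the $G$ input values. These two facts verify precisely the hypotheses of \cref{thm:stop-ebh}, whose conclusion is exactly the claimed statement: $\sup_{P\in\cM,\,\tau\in\ST}\fdr_P[R_\tau]\le\alpha$.

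There is really no serious obstacle here — the corollary is a composition of two already-proven statements, and the only thing to check is that the hypothesis of \cref{thm:stop-ebh} ("$\{M_n^g\}$ are global e-processes") is met, which is the content of \cref{thm:glbl}. The one subtlety worth a sentence is that \cref{thm:glbl} requires \cref{asn:causal} (and the time-homogeneous marginal conditional model \eqref{eqn:model}), so the corollary inherits those standing assumptions; I would state this dependence explicitly rather than leave it implicit. Everything else — measurability of $\mathsf{eBH}_\alpha$, adaptedness, the supremum over $P\in\cM$ and $\tau\in\ST$ — is already packaged inside \cref{def:stopped-fdr} and \cref{thm:stop-ebh}, so no further work is needed.
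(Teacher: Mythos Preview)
Your proposal is correct and matches the paper's own reasoning exactly: the paper does not even write out a separate proof, simply stating that the corollary follows by combining \cref{thm:glbl} with \cref{thm:stop-ebh}. Your additional remarks on adaptedness and the inherited standing assumptions are fine elaborations but not required.
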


We provide numerous practical examples in \ref{app:examples}, as well as a counterexample where the failture of \cref{asn:causal} leads to the explicit non-globality of a local e-process. The following biostatistics example, to be elaborated in full in \ref{sec:ui}, may be of particular interest.
\begin{example}[Sequential single-cell differential gene expression testing] \label{ex:scdge}
We test $\cH^g_0: \beta^g = 0$ in the negative binomial generalized linear model
        \begin{equation}\label{eqn:nb}
            Y_n^g \sim \operatorname{NB}( \text{mean}= \exp( X_n \beta^g + \gamma^g ) , \text{dispersion}=\alpha^g )
        \end{equation}
        by the universal inference \citep{wasserman2020universal} e-process 
        \begin{equation}
              U_n^g = \prod_{i=1}^n \frac{ p(Y_i^g | X_i, \hat\beta_{i-1}^g, \hat\gamma_{i-1}^g)  }{ p(Y_i^g | X_i, 0, \hat\gamma_{n}^g) }.
        \end{equation}
        Above, $p(Y_i^g | X_i, \beta^g, \gamma^g)$ is the closed-form probability mass function given by the model \eqref{eqn:nb} (assuming $\alpha^g$ known);  $\hat\beta_{n}^g$, $\hat\gamma_{n}^g$ denote the maximum likelihood estimators of $\beta^g$, $\gamma^g$
        w.r.t.\ the local data stream $\{(X_i, Y_i^g)\}_{1 \le i \le n}$. While this e-process is locally constructed, it can be globalized by
    \cref{asn:causal}, which is reasonable for biostatistics applications as elaborated in \ref{sec:ui}.
\end{example}

Finally, if \cref{asn:causal} fails, one may ``adjust'' a local e-process so it becomes global \citep{choe2024combining}, discussed in \ref{sec:adj}.


\section{Summary}

We propose the stopped e-BH procedure for sequential, anytime-valid multiple hypothesis testing, a procedure both necessary and sufficient for strict false discovery rate control at all stopping times. We carefully distinguish local and global stopping times and e-processes: local e-processes are more commonly constructed, but global (compound) e-processes are more relevant to the FDR control objective. A crucial causal condition is identified that bridges the local-global distinction.
Our work demonstrates the theoretical foundation of using e-values and e-processes for large-scale and complex scientific experiments. These experiments can have arbitrarily dependent hypotheses, sample sizes and sampling schemes for which allowing optional stopping and continuation is necessarily beneficial.

\section*{Acknowledgements}

The authors acknowledge support from NSF grant 2310718.
\bibliography{main}

\begin{thebibliography}{13}
\providecommand{\natexlab}[1]{#1}
\providecommand{\url}[1]{\texttt{#1}}
\expandafter\ifx\csname urlstyle\endcsname\relax
  \providecommand{\doi}[1]{doi: #1}\else
  \providecommand{\doi}{doi: \begingroup \urlstyle{rm}\Url}\fi

\bibitem[Benjamini and Hochberg(1995)]{benjamini1995controlling}
Y.~Benjamini and Y.~Hochberg.
\newblock Controlling the false discovery rate: a practical and powerful approach to multiple testing.
\newblock \emph{Journal of the Royal Statistical Society: Series B (Methodological)}, 57\penalty0 (1):\penalty0 289--300, 1995.

\bibitem[Choe and Ramdas(2024)]{choe2024combining}
Y.~J. Choe and A.~Ramdas.
\newblock Combining evidence across filtrations.
\newblock \emph{arXiv preprint arXiv:2402.09698}, 2024.

\bibitem[Dandapanthula and Ramdas(2025)]{dandapanthula2025multiple}
S.~Dandapanthula and A.~Ramdas.
\newblock Multiple testing in multi-stream sequential change detection.
\newblock \emph{arXiv e-prints}, pages arXiv--2501, 2025.

\bibitem[Finner et~al.(2009)Finner, Dickhaus, and Roters]{finner-prds}
H.~Finner, T.~Dickhaus, and M.~Roters.
\newblock {On the false discovery rate and an asymptotically optimal rejection curve}.
\newblock \emph{The Annals of Statistics}, 37\penalty0 (2):\penalty0 596 -- 618, 2009.
\newblock \doi{10.1214/07-AOS569}.

\bibitem[Howard et~al.(2021)Howard, Ramdas, McAuliffe, and Sekhon]{howardcs}
S.~R. Howard, A.~Ramdas, J.~McAuliffe, and J.~Sekhon.
\newblock Time-uniform, nonparametric, nonasymptotic confidence sequences.
\newblock \emph{The Annals of Statistics}, 49\penalty0 (2):\penalty0 pp. 1055--1080, 2021.
\newblock ISSN 00905364, 21688966.

\bibitem[Ignatiadis et~al.(2024)Ignatiadis, Wang, and Ramdas]{ignatiadis2024compound}
N.~Ignatiadis, R.~Wang, and A.~Ramdas.
\newblock Asymptotic and compound e-values: multiple testing and empirical {B}ayes.
\newblock \emph{arXiv preprint arXiv:2409.19812}, 2024.

\bibitem[Shafer et~al.(2011)Shafer, Shen, Vereshchagin, and Vovk]{shafer2011test}
G.~Shafer, A.~Shen, N.~Vereshchagin, and V.~Vovk.
\newblock Test martingales, {Bayes} factors and p-values.
\newblock \emph{Statistical Science}, 26\penalty0 (1):\penalty0 84 -- 101, 2011.
\newblock \doi{10.1214/10-STS347}.

\bibitem[Tavyrikov et~al.(2025)Tavyrikov, Goeman, and de~Heide]{tavyrikov2025carefree}
Y.~Tavyrikov, J.~J. Goeman, and R.~de~Heide.
\newblock Carefree multiple testing with e-processes.
\newblock \emph{arXiv preprint arXiv:2501.19360}, 2025.

\bibitem[Wang and Ramdas(2023)]{WANG2023168}
H.~Wang and A.~Ramdas.
\newblock Catoni-style confidence sequences for heavy-tailed mean estimation.
\newblock \emph{Stochastic Processes and their Applications}, 163:\penalty0 168--202, 2023.
\newblock ISSN 0304-4149.
\newblock \doi{10.1016/j.spa.2023.05.007}.

\bibitem[Wang and Ramdas(2022)]{wang2022false}
R.~Wang and A.~Ramdas.
\newblock False discovery rate control with e-values.
\newblock \emph{Journal of the Royal Statistical Society Series B: Statistical Methodology}, 84\penalty0 (3):\penalty0 822--852, 2022.

\bibitem[Wasserman et~al.(2020)Wasserman, Ramdas, and Balakrishnan]{wasserman2020universal}
L.~Wasserman, A.~Ramdas, and S.~Balakrishnan.
\newblock Universal inference.
\newblock \emph{Proceedings of the National Academy of Sciences}, 117\penalty0 (29):\penalty0 16880--16890, 2020.

\bibitem[Xu et~al.(2021)Xu, Wang, and Ramdas]{xu2021unified}
Z.~Xu, R.~Wang, and A.~Ramdas.
\newblock A unified framework for bandit multiple testing.
\newblock \emph{Advances in Neural Information Processing Systems}, 34:\penalty0 16833--16845, 2021.

\bibitem[Xu et~al.(2025)Xu, Fischer, and Ramdas]{xu2025bringing}
Z.~Xu, L.~Fischer, and A.~Ramdas.
\newblock Bringing closure to {FDR} control: beating the e-benjamini-hochberg procedure.
\newblock \emph{arXiv preprint arXiv:2504.11759}, 2025.

\end{thebibliography}
\newpage
\appendix
\renewcommand{\thesection}{Supp.\arabic{section}}
\section{Ambient filtrations}\label{app:ambient}

In \cref{ex:coin}, we utilize the stardard fact that if $Y_1,Y_2,\dots$ are $[-1,1]$-valued random variables, the ``betting'' process
\begin{equation}
    M_n = \prod_{i=1}^n \left(1 + \frac{Y_i}{2} \right)
\end{equation}
is an e-process for the null
\begin{equation}\label{eqn:correct-null}
    \Exp[ Y_n | Y_1, \dots, Y_{n-1} ] = 0 , \quad\text{for all }n
\end{equation}
on the \emph{natural} filtration
\begin{equation}
    \sigma( Y_1, \dots, Y_n ).
\end{equation}
It is the natural filtration here that leads to the locality of the e-processes in the multiple testing case as described in \cref{ex:coin}.

However, here and in many other single-hypothesis e-process, the filtration need not be natural and can be instead an arbitrary enlargement. For example, if we assume the $[-1,1]$-valued random variables $Y_1, Y_2,\dots$ are adapted to some ambient filtration $\{\cH_n\}$, the very same process $\{M_n\}$ is an e-process on this $\{ \cH_n \}$ for the null
\begin{equation}\label{eqn:wrong-null}
    \Exp[Y_n|\cH_{n-1}] = 0, \quad\text{for all }n.
\end{equation}
If we \emph{take} the ambient filtration $\{\cH_n\}$ to be the global filtration in multiple testing, do we obtain global e-processes for free? However, while the e-processes become global, \emph{they are testing the wrong nulls}! This can be seen from the change from \eqref{eqn:correct-null} to \eqref{eqn:wrong-null} above. We write down now such ``ambient filtration'' variation of \cref{ex:coin} below.
\begin{example}[Multiple coin betting, ambient filtration]\label{ex:coin-amb}
     For each $g \in [G]$, there is a stream $\{Y_n^g\} \iid \operatorname{Rad}(\theta^g)$. Define the global filtration $\cF_n = \sigma( Y_i^g,\dots, Y_n^g  : g \in [G] )$. The process $
         M_n^g = \prod_{i=1}^n \left( 1 + Y^g_i / 2 \right) $
     is an e-process on the global filtration $\{\cF_n \}$ for the null
     \begin{equation}
         \Exp[ Y_n^g | \cF_{n-1}  ]  = 1/2.
     \end{equation}
     Without further causal assumptions, there is \underline{no guanratee} that $ \Exp[ Y_n^g | \cF_{n-1}  ]$ equals $\theta^g$.
 \end{example}
 Indeed, it can be easily seen that should \cref{asn:causal} be met, $(Y_n^g : g \in [G])$ is independent from the past $\cF_{n-1}$, then $ \Exp[ Y_n^g | \cF_{n-1}  ]$ equals $\theta^g$. In summary, using the global filtration as the ambient filtration does not avoid the issue of globality.

\section{Asynchrony}\label{app:asynch}

We now discuss the situation where data streams of different hypotheses arrive asynchronously. We continue to work with the Rademacher coin toss model as \cref{ex:coin}.
Let $ \{Y^1_n \} \iid \rad(\theta^1) $ be a ``fast'' stream and $ \{Y^2_n \} \iid \rad(\theta^2)$ a slow stream, such that $Y^1_3$ may exert a causal effect on $Y_2^2$ (say $Y^1_3 = Y^2_2$). For example, assume the temporal order of observations follows \cref{tab:arriv}.
\begin{table}[!h]
    \centering
    \begin{tabular}{c|c|c|c|c|c}
        $Y^1_1$  &  $Y_2^1$ &   $Y_3^1$ &  $Y_4^1$ & $Y_5^1$ & $\dots$  \\ \hline
        &  $Y^2_1$ &&   & $Y^2_2$ & $\dots$
    \end{tabular}
    \caption{Order of data arrival.}
    \label{tab:arriv}
\end{table}
This setting is then similar to our counterexample in \cref{sec:counterexample} where one stream foretells the other (here $Y_2^2$ ``foretells'' $Y_3^1$, understood index-wise), and all the consequences apply, if one aligns the observation counts by  ``waiting'' for the slow stream and erroneously applying stopped e-BH to the local e-processes
\begin{equation}
    M^1_n = \prod_{i=1}^n \left(1 + \frac{Y^1_i}{2} \right), \quad
    M^2_n = \prod_{i=1}^n \left(1 + \frac{Y^2_i}{2} \right).
\end{equation}

The correct approach here is to use the natural time instead, grouping and reindexing the sample. For example, with the data arrival scheme \cref{tab:arriv},
define two new streams that are synchronized:
\begin{gather}
    W^1_1 = (Y^1_1, Y^1_2), \quad W^1_2 = (Y^1_3, Y^1_4, Y^1_5), \quad \dots \\
    W^2_1=Y^2_1, \quad W^2_2=Y^2_2, \quad \dots
\end{gather}
Then \cref{asn:causal} requires the independence
\begin{equation}\label{eqn:indep}
    \mathbf W_n \perp (\mathbf W_1,\dots, \mathbf W_{n-1}  ).
\end{equation}
Once \eqref{eqn:indep} holds, one may now apply stopped e-BH on the reindexed local e-processes
\begin{gather}
   {\widetilde M}^1_1 = M^1_2, \quad {\widetilde M}^1_2 = M^1_5, \quad \dots \\
    {\widetilde M}^2_1=M^2_1, \quad {\widetilde M}^2_2=M^2_2, \quad \dots
\end{gather}


Note that in this setting we can allow causality from  $Y^1_3$ to $Y_2^2$ (now as cross-hypthesis dependence within simultaneous observation), but not from $Y^1_2$ to $Y_2^2$ (as this would violate \eqref{eqn:indep}). Indeed, after time synchronization, one still needs to make sure \cref{asn:causal} holds, as is the case with synchronous streams.

\section{Compound e-processes and universality}\label{app:comp}

The e-BH procedure can be applied to random variables satisfying a condition weaker than being e-values \citep[Proposition 3]{wang2022false} while still holding the FDR control. 
Such random variables are named \emph{compound e-values} by \citet[Definition 1.1]{ignatiadis2024compound}.

\begin{definition}[Compound e-values]\label{def:comp-e}
    $G$ random variables $\mathbf{E} = (E_g : g \in [G]) \in \mathbb R_{\ge 0 }^G $ are called \emph{compound e-values for $(\cP^g : g \in G)$} if,
    \begin{equation}
      \sup_{P \in \cM}  \sum_{g\in [G]} \id\{ P \in \cP^g \} \cdot \Exp_{P} E_g \le G.
    \end{equation}
\end{definition}
Clearly, if each $E_g$ is an e-value for $\cP^g$, that is, $\sup_{P \in \cP^g}\Exp_{P} E_g \le 1$, then $\mathbf{E}$ are compound e-values. However compound e-values form a much larger class including weighted e-values. The e-BH on compound e-values controls the FDR due to \citet[Theorem 3.2]{ignatiadis2024compound}.

\begin{lemma}[FDR property of compound e-BH]\label{lem:fdr-ebh-comp} Let $\mathbf{E} = (E_g : g \in [G]) $ be compound e-values for $(\cP^g: g\in [G])$. Then, for any underlying {$P \in \cM$},
\begin{equation}
  \fdr_P[\mathsf{eBH}_{\alpha}(\mathbf{E})]  \le \alpha.
\end{equation}
\end{lemma}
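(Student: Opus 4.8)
The statement to prove is Lemma~\ref{lem:fdr-ebh-comp}, the FDR property of compound e-BH. Since the excerpt attributes this to \citet[Theorem 3.2]{ignatiadis2024compound}, my job is to reconstruct the standard argument. The plan is to mimic the proof of Lemma~\ref{lem:fdr-ebh} (the original e-BH FDR bound) but carry the ``compound'' relaxation through the same steps, since the only place e-value-ness is used in that argument is at the very end when summing expectations.

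First I would fix an arbitrary underlying $P \in \cM$ and write $R = \mathsf{eBH}_\alpha(\mathbf{E})$ and $k = |R| \vee 1$. The key structural fact about e-BH is that on the event $\{k = m\}$ for $m \ge 1$, every rejected index $g \in R$ satisfies $E_g \ge G/(m\alpha)$, equivalently $\id\{g \in R\} \le \id\{g \in R\}\, \frac{m\alpha}{G} E_g \le \frac{\alpha}{G}\cdot\frac{m}{m}\cdot G \cdot \ldots$ — more cleanly, the self-consistency / deterministic inequality
\[
\frac{\id\{g \in R\}}{|R| \vee 1} \le \frac{\alpha}{G} E_g \id\{g\in R\} \le \frac{\alpha}{G} E_g,
\]
which holds because $|R| \vee 1 = k$ and $E_g \ge G/(k\alpha)$ whenever $g \in R$ (this is exactly the e-BH rejection threshold). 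This step is purely combinatorial and does not reference the distribution of the $E_g$ at all.

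Then I would plug this into the definition of FDR:
\[
\fdr_P[R] = \Exp_P\!\left[\sum_{g=1}^G \id\{P \in \cP^g\}\,\frac{\id\{g \in R\}}{|R|\vee 1}\right] \le \frac{\alpha}{G}\,\Exp_P\!\left[\sum_{g=1}^G \id\{P \in \cP^g\}\, E_g\right] = \frac{\alpha}{G}\sum_{g=1}^G \id\{P\in\cP^g\}\,\Exp_P E_g.
\]
Now — and this is the only place the hypothesis enters — the compound e-value condition (Definition~\ref{def:comp-e}) says exactly that $\sum_{g\in[G]} \id\{P\in\cP^g\}\,\Exp_P E_g \le G$, so the right-hand side is at most $\frac{\alpha}{G}\cdot G = \alpha$. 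I should be slightly careful that $\id\{P \in \cP^g\}$ is a constant (a function of $P$ only, not random), so pulling it out of $\Exp_P$ is legitimate and the sum/expectation exchange is just linearity over finitely many terms.

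The main obstacle — really the only non-routine point — is justifying the deterministic self-consistency inequality with the correct constant $G$ in the numerator, i.e. that for the e-BH procedure as defined (with threshold $g^* = \max\{g : gE_{(g)}/G \ge 1/\alpha\}$), one indeed has $E_g \ge (|R|\vee 1)^{-1} G/\alpha$ for every $g \in R$ and that $|R| = g^*$; this requires unwinding the definition of $g^*$ and the rejection rule $\{g : E_g \ge E_{(g^*)}\}$ and checking $|\{g : E_g \ge E_{(g^*)}\}| \ge g^*$ with the threshold bound. This is standard (it is the content of the original e-BH analysis) but it is where all the actual work sits; everything downstream is a one-line expectation bound. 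I would present it as a short lemma-internal claim and then finish with the three-line expectation computation above.
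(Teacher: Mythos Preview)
The paper does not give its own proof of this lemma; it simply cites \citet[Theorem 3.2]{ignatiadis2024compound}. Your reconstruction is correct and is exactly the standard argument found there (and in \cite{wang2022false} for ordinary e-values): the deterministic self-consistency inequality $\id\{g\in R\}/(|R|\vee 1)\le (\alpha/G)E_g$, followed by summing only over true nulls and invoking the compound constraint. Your side remark that one needs only $|R|\ge g^*$ (immediate from the order-statistic definition) rather than $|R|=g^*$ is also right and suffices for the bound, so no tie-breaking subtlety arises.
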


It is worth noting that the converse to \cref{lem:fdr-ebh-comp} is also true, due to \citet[Theorems 3.1 and 3.3]{ignatiadis2024compound}. That is, if a random set $R \subseteq [G]$ whose FDR is below $\alpha$, there exist $G$ compound e-values $\mathbf E$ such that $ \mathsf{eBH}_{\alpha}(\mathbf E)  =R$. 

We now spell out the sequential counterparts of these results.
Below, we denote the set of all stopping times on $\{\cF^g_n \}$ by $\ST^g$, and on $\{ \cF_n \}$ by $\ST$, and define local and global compound e-processes.

\begin{definition}\label{def:processes}
    Let $(\{ M^g_n \}: g\in [G])$ be $G$ nonnegative processes. They are called:
    \begin{enumerate}
        \item \emph{local e-processes for $(\cP^g : g \in [G])$}, if for any $g \in [G]$, $\{ M^g_n \}$ is adapted to $\{  \cF_n^g \}$ and
        \begin{equation}
            \sup_{ \substack{ P \in \cP^g \\ \tau \in \ST^g } } \Exp_{P} M^g_{\tau} \le 1;
        \end{equation}
         \item \emph{local compound e-processes for $(\cP^g : g \in [G])$}, if for any $g \in [G]$, $\{ M^g_n \}$ is adapted to $\{  \cF_n^g \}$ and
        \begin{equation}
            \sup_{ \substack{  P \in \cM \\ \tau^1 \in \ST^1 \\ \dots \\ \tau^G \in \ST^G  } } \left( \sum_{g\in[G]} \id\{ P \in \cP^g \} \cdot \Exp_{P} M^g_{\tau^g} \right) \le G;
        \end{equation}
         \item \emph{global e-processes for $(\cP^g : g \in [G])$}, if for any $g \in [G]$, $\{ M^g_n \}$ is adapted to $\{  \cF_n \}$ and
        \begin{equation}
            \sup_{ \substack{ P \in \cP^g \\ \tau \in \ST } } \Exp_{P} M^g_{\tau} \le 1;
        \end{equation}
         \item \emph{global compound e-processes for $(\cP^g : g \in [G])$}, if for any $g \in [G]$, $\{ M^g_n \}$ is adapted to $\{  \cF_n \}$ and
        \begin{equation}
            \sup_{ \substack{  P \in \cM \\ \tau \in \ST } } \left( \sum_{g\in[G]}\id\{ 
P \in \cP^g  \} \cdot\Exp_{P} M^g_{\tau} \right) \le G.
        \end{equation}
    \end{enumerate}
\end{definition}

These definitions satisfy the clear inclusion relations $\textit{1} \subseteq \textit{2}$ and $\textit{3} \subseteq \textit{4}$. $\textit{1}$ and $\textit{3}$, for example, are not included in either direction, because while adaptivity to a larger filtration is a weaker condition, returning an e-value at any stopping time on a larger filtration is a stronger condition. Nonetheless, local and local compound e-processes are much more natural and straightforward to construct, as we have previously argued in \cref{ex:coin}. The following statement extends \cref{thm:stop-ebh}, whose proof is straightforward.

\begin{proposition}\label{thm:stop-ebh-comp}
    Let $(\{ M^g_n \}: g\in [G])$ be global compound e-processes. Then, the set process $\{ \mathsf{eBH}_{\alpha}(M^g_n: g\in [G]  )\}$  satisfies the {level-$\alpha$ stopped FDR control}.
\end{proposition}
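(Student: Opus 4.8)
The plan is to reduce Proposition \ref{thm:stop-ebh-comp} to the fixed-sample compound-e-value statement of \cref{lem:fdr-ebh-comp} by showing that, for any fixed global stopping time $\tau \in \ST$ and any $P \in \cM$, the stopped values $(M^g_\tau : g \in [G])$ are compound e-values for $(\cP^g : g \in [G])$ in the sense of \cref{def:comp-e}. First I would fix $\tau \in \ST$ and $P \in \cM$, and observe that by the definition of a global compound e-process (item \textit{4} of \cref{def:processes}), since $\tau$ is a stopping time on $\{\cF_n\}$, we have directly
\begin{equation*}
  \sum_{g \in [G]} \id\{ P \in \cP^g \} \cdot \Exp_P M^g_\tau \le \sup_{\tau' \in \ST} \sum_{g\in[G]} \id\{ P \in \cP^g\}\cdot \Exp_P M^g_{\tau'} \le G.
\end{equation*}
Hence $(M^g_\tau : g\in [G])$ satisfies \cref{def:comp-e} with respect to the same collection $(\cP^g)$ and the same ambient model $\cM$. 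The measurability bookkeeping — that each $M^g_\tau$ is a genuine random variable because $\{M^g_n\}$ is $\{\cF_n\}$-adapted and $\tau$ is an $\{\cF_n\}$-stopping time — is routine and I would mention it only in passing.

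Next I would apply \cref{lem:fdr-ebh-comp} to these compound e-values, which yields $\fdr_P[\mathsf{eBH}_\alpha(M^g_\tau : g\in [G])] \le \alpha$. The only remaining point is to match $\mathsf{eBH}_\alpha(M^g_\tau : g \in [G])$ with $R_\tau$, where $R_n := \mathsf{eBH}_\alpha(M^g_n : g\in[G])$ is the set process in the statement: this is immediate since $\mathsf{eBH}_\alpha$ is applied pointwise in $n$ (it is a deterministic function of the current values), so evaluating the set process at the random time $\tau$ gives exactly e-BH applied to the stopped values. Therefore $\fdr_P[R_\tau] \le \alpha$. Taking the supremum over $P \in \cM$ and $\tau \in \ST$ gives the level-$\alpha$ stopped FDR control of \cref{def:stopped-fdr}.

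I do not expect a serious obstacle here — the proposition is, as the paper says, straightforward, and the proof of \cref{thm:stop-ebh} is the special case where each $\{M^g_n\}$ is an honest global e-process (item \textit{3} implies item \textit{4} by the inclusion noted after \cref{def:processes}). The one place requiring a little care, and the closest thing to a ``main obstacle,'' is the order of quantifiers: \cref{def:processes} item \textit{4} bounds the sum $\sum_g \id\{P\in\cP^g\}\Exp_P M^g_\tau$ for a \emph{common} stopping time $\tau$ shared across all $g$, which is exactly what the stopped e-BH at a global $\tau$ needs — as opposed to the local compound e-process notion (item \textit{2}), which allows a different $\tau^g$ per stream and would \emph{not} suffice for stopping after seeing the e-BH output. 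I would make this contrast explicit in one sentence so the reader sees why globality (and not merely the local compound property) is what drives the argument.
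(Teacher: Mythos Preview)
Your proposal is correct and is exactly the straightforward argument the paper alludes to (the paper does not spell out a proof, only remarks that it is immediate from \cref{lem:fdr-ebh-comp}). The one cosmetic point: to verify \cref{def:comp-e} you should fix $\tau$ first and then let $P$ range over $\cM$, rather than fixing both simultaneously --- but since your displayed inequality holds for every $P$, this is only a matter of phrasing.
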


We now establish the following converse to \cref{thm:stop-ebh-comp}.

\begin{theorem}[Universality of stopped e-BH]
     Let  $\{ R_n \}$ be a set process adapted to $\{\cF_{n}\}$ satisfiying the level-$\alpha$ stopped FDR control. Then, there exist $G$ global compound e-processes $(\{ M^g_n \}: g\in [G])$ for $(\cP^g : g \in [G])$ such that
     \begin{equation}
        R_n = \mathsf{eBH}_{\alpha}(M^g_n: g\in [G]  ) \quad \text{for all }n.
    \end{equation}
\end{theorem}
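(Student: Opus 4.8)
The plan is to mimic the construction behind the finite-sample converse of \citet[Theorem 3.1]{ignatiadis2024compound}, but carried out pathwise in a way that respects the filtration $\{\cF_n\}$. First I would fix the set process $\{R_n\}$ and, for each $n$ and each $g \in [G]$, define a candidate random variable $M_n^g$ directly in terms of $R_n$ by the ``inverse e-BH'' formula: set
\begin{equation}\label{eqn:Mdef}
    M_n^g = \frac{G}{\alpha\,\lvert R_n\rvert}\,\id\{ g \in R_n \},
\end{equation}
with the convention that $M_n^g = 0$ when $R_n = \emptyset$. The first routine step is to check that, for each fixed $n$, the tuple $(M_n^g : g\in[G])$ recovers $R_n$ under $\mathsf{eBH}_\alpha$ — this is exactly the computation in the finite-sample converse: with these values one has $g M^g_{(g)}/G \ge 1/\alpha$ at $g = \lvert R_n\rvert$ and the e-BH threshold lands precisely on $R_n$. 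Since $R_n$ is $\cF_n$-measurable, \eqref{eqn:Mdef} makes each $\{M_n^g\}$ an $\{\cF_n\}$-adapted nonnegative process, so adaptedness (part of being a global compound e-process) is immediate.

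The substance is the compound e-process inequality: for every $P\in\cM$ and every $\tau\in\ST$,
\begin{equation}\label{eqn:target}
    \sum_{g\in[G]} \id\{P\in\cP^g\}\cdot \Exp_P M^g_\tau \le G.
\end{equation}
Here I would substitute \eqref{eqn:Mdef} into the left side and pull the expectation inside the finite sum, getting
\begin{equation}
    \frac{G}{\alpha}\,\Exp_P\!\left[\frac{\sum_{g} \id\{P\in\cP^g\}\,\id\{g\in R_\tau\}}{\lvert R_\tau\rvert}\right]
    \;=\; \frac{G}{\alpha}\,\Exp_P\!\left[\frac{\sum_{g} \id\{P\in\cP^g\}\,\id\{g\in R_\tau\}}{1\vee \lvert R_\tau\rvert}\right],
\end{equation}
where the equality uses that the numerator vanishes whenever $R_\tau=\emptyset$. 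The bracketed quantity is exactly $\fdp_P[R_\tau]$, whose expectation is $\fdr_P[R_\tau]$, and the hypothesis that $\{R_n\}$ satisfies level-$\alpha$ stopped FDR control gives $\fdr_P[R_\tau]\le\alpha$ for every $\tau\in\ST$ and every $P\in\cM$. Hence the right side is at most $\frac{G}{\alpha}\cdot\alpha = G$, which is \eqref{eqn:target}. Crucially, the stopped-FDR hypothesis quantifies over \emph{all} global stopping times $\tau$, which is precisely what the compound e-process condition in \cref{def:processes} demands, so there is no gap between the two quantifier ranges.

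I expect the only real subtlety — and the step I would be most careful about — is a bookkeeping one rather than a deep one: verifying that the convention ``$M_n^g = 0$ when $R_n=\emptyset$'' in \eqref{eqn:Mdef} is consistent with $\mathsf{eBH}_\alpha$ returning $\emptyset$ on the all-zero input (so that $R_n = \mathsf{eBH}_\alpha(M^g_n : g\in[G])$ holds on that event too), and more generally that for $R_n\ne\emptyset$ the e-BH selection applied to \eqref{eqn:Mdef} returns exactly $R_n$ and not a strict superset — this is where one must check that $gM^g_{(g)}/G < 1/\alpha$ for all $g > \lvert R_n\rvert$, which holds because those order statistics are zero. All of this is pathwise and $n$-by-$n$, so measurability and the stopping-time structure never interact with it; the filtration only enters through the (already granted) $\cF_n$-measurability of $R_n$ and through applying the FDR hypothesis at the stopping time $\tau$. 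Once these checks are in place, combining the pathwise identity $R_n = \mathsf{eBH}_\alpha(M^g_n:g\in[G])$ with \eqref{eqn:target} yields both conclusions of the theorem.
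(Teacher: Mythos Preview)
Your proposal is correct and follows essentially the same route as the paper: both define $M_n^g = \frac{G}{\alpha}\cdot\frac{\id\{g\in R_n\}}{|R_n|\vee 1}$, verify that $\mathsf{eBH}_\alpha$ on these values reproduces $R_n$, note $\cF_n$-adaptedness, and then obtain the global compound e-process inequality by recognizing $\sum_g \id\{P\in\cP^g\}\Exp_P M_\tau^g = \frac{G}{\alpha}\fdr_P[R_\tau]\le G$. The only cosmetic difference is that the paper outsources the ``$\mathsf{eBH}_\alpha$ reconstructs $R_n$'' step to \citet[Theorems 3.1 and 3.3]{ignatiadis2024compound}, whereas you verify it directly (including the empty-set and no-superset checks), which is arguably cleaner.
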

\begin{proof}
    First, at any non-random $n$, since $R_n$ is a $\cF_n$-measurable random set that controls the FDR at level $\alpha$, it follows from \citet[Theorems 3.1 and 3.3]{ignatiadis2024compound} that
    \begin{equation}
        R_n = \mathsf{eBH}_{\alpha}(M^g_n: g\in [G]  )
    \end{equation}
    for $G$ compound e-\emph{values} $M^1_n,\dots, M^G_n$ {defined by}
    \begin{equation}
        M^g_n = \frac{G}{\alpha} \cdot \frac{\id\{ g \in R_n \}}{ |R_n| \vee 1 }.
    \end{equation}
    Hence {these compound e-values} are $\cF_n$-measurable as well. We have thus defined the processes $\{M_n^g\}$ for $g \in [G]$, all globally adapted, and it remains to verify they are {compound} e-processes globally. Take any $\tau \in \ST$. Since the FDR is controlled for the set $R_\tau$, again we have
    \begin{equation}
        R_\tau = \mathsf{eBH}_{\alpha}(E^g_\tau: g\in [G]  )
    \end{equation}
    for $G$ compound e-{values} $E^1_\tau,\dots, E^G_\tau$ where
    \begin{equation}
         E^g_\tau = \frac{G}{\alpha} \cdot \frac{\id\{ g \in R_\tau \}}{ |R_\tau| \vee 1 },
    \end{equation}
    which equals $M^g_{\tau}$. Therefore 
    \begin{equation}
        \sum_{g\in [G]} \id\{ P \in \cP^g \} \Exp_{P} M^g_\tau =  \sum_{g\in [G]} \id\{ P \in \cP^g \} \Exp_{P} E^g_\tau  \le G
    \end{equation}
    under any {$P \in \cM$},
    concluding that they are global compound e-processes.
\end{proof}

\section{Global e-processes via adjusters}\label{sec:adj}

While we have discussed in \cref{sec:mark} certain conditions under which local e-processes are naturally global e-processes, we now quote a complementary recent result by \cite{choe2024combining} which states that by slightly ``muting'' an e-process, it becomes an e-process on arbitrary refinement  of the filtration. This is helpful when \cref{asn:causal} does not hold or is not verifiable.
We first recall the definition of an \emph{adjuster}, which traces back to \cite{shafer2011test}.

\begin{definition}\label{def:adj}
 An adjuster is a non-decreasing function $\mathsf{A}: \mathbb R_{\ge 1} \to \mathbb R_{\ge 0} $ such that
 \begin{equation}
     \int_1^\infty \frac{\mathsf A(x)}{x^2} dx \le 1.
 \end{equation}
\end{definition}
Simple examples include $x \mapsto kx^{1-k}$ for $k\in(0,1)$ and $\sqrt{x} -1$. Applying an adjuster to an e-process, or more generally the running maximum of an e-process is referred to as ``e-lifting'' by \citet[Theorem 2]{choe2024combining} and yields an e-process for any finer filtrations. That is,
    if $\{ M_n \}$ is an e-process for $\cP$ on $\{ \cG_n \}$, then for any adjuster $\mathsf A$ and any filtration $\{ \cF_n \} \supseteq \{ \cG_n \}$, the adjusted process $\{ M_n^a \}$ where
    \begin{equation}
       M_0^a = 1, \quad M_n^a = \mathsf{A}\left( \max_{0\le i \le n}M_i \right)
    \end{equation}
    is an e-process for $\cP$ on $\{ \cF_n \}$.
Following the discussion on compound e-processes in \cref{app:comp}, we formalize the concept of \emph{compound adjusters} below for the sake of generality.

\begin{definition}
     Let $( \mathsf{A}^g  : g \in [G] )$ be a family of non-decreasing functions $\mathbb R_{\ge 1} \to \mathbb R_{\ge 0} $. They are called \emph{compound adjusters} if their average is an adjuster, i.e.,
     \begin{equation}
     \int_1^\infty \frac{\sum_{g \in [G]}\mathsf A^g(x)}{x^2} dx \le G.
 \end{equation}
\end{definition}

That is, compound adjusters arise from decomposing an adjuster into $G$ monotone components.
For example, a vector of $G$ adjusters are compound adjusters; and similar to compound e-values, one can construct compound adjusters by taking a weighted sum of $G$ adjusters as long as weights sum up $\le G$. The following statement shows that we can first apply compound adjusters then apply e-BH to local e-processes to control the stopped FDR.

\begin{corollary}\label{cor:adj-comp}
 Let $(\{ M_n^g \} : g \in [G])$ be local e-processes and  $( \mathsf{A}^g  : g \in [G] )$ be compound adjusters.  Then, the set process $\{ \mathsf{eBH}_{\alpha}(M^{ga}_n: g\in [G]  )\}$  satisfies the level-$\alpha$ stopped FDR control, where 
  \begin{equation}
       M_0^{ga} = 1, \quad M_n^{ga} = \mathsf{A}^g\left( \max_{0\le i \le n}M_i^g \right).
    \end{equation}
\end{corollary}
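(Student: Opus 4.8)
The plan is to reduce Corollary~\ref{cor:adj-comp} to Proposition~\ref{thm:stop-ebh-comp} by showing that the adjusted processes $(\{M_n^{ga}\} : g \in [G])$ are global compound e-processes for $(\cP^g : g \in [G])$. Once that is established, the stopped FDR control follows verbatim from Proposition~\ref{thm:stop-ebh-comp} applied to $(\{M_n^{ga}\})$. So the entire content is the verification of the global compound e-process property in the sense of item~\textit{4} of Definition~\ref{def:processes}: adaptedness to $\{\cF_n\}$, and the inequality $\sup_{P \in \cM,\ \tau \in \ST} \sum_{g \in [G]} \id\{P \in \cP^g\} \cdot \Exp_P M^g_\tau \le G$.

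Adaptedness is immediate: each $M_n^{ga} = \mathsf{A}^g(\max_{0 \le i \le n} M_i^g)$ is a deterministic function of $M_0^g, \dots, M_n^g$, which are $\cF_n^g$-measurable hence $\cF_n$-measurable, and $\mathsf{A}^g$ is measurable (being monotone). For the compound e-process inequality, the key idea is to use e-lifting \emph{componentwise but with a shared running maximum-type trick}. Fix $P \in \cM$ and a global stopping time $\tau \in \ST$. The starting observation is that for each $g$ with $P \in \cP^g$, $\{M_n^g\}$ is a \emph{local} e-process, so by the supermartingale characterization there is a nonnegative $P$-supermartingale $\{N_n^{P,g}\}$ on $\{\cF_n^g\}$ with $N_0^{P,g} = 1$ and $N_n^{P,g} \ge M_n^g$ for all $n$. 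The subtlety flagged throughout the paper is that $\{N_n^{P,g}\}$ is a supermartingale only on the \emph{local} filtration, not necessarily on $\{\cF_n\}$; this is exactly where e-lifting (the cited \citet[Theorem 2]{choe2024combining} result, quoted in Section~\ref{sec:adj}) does its work. Applying that quoted result with $\cG_n = \cF_n^g$ and the finer filtration $\cF_n$, we get that $\{\mathsf{A}^g(\max_{0\le i \le n} M_i^g)\} = \{M_n^{ga}\}$ is an e-process for $\cP^g$ on $\{\cF_n\}$ \emph{in the single-hypothesis sense}, i.e.\ $\Exp_P M^g_\tau^{ga} \le 1$ for every $\tau \in \ST$ --- but this only works when $\mathsf{A}^g$ is itself an adjuster, whereas here only the average of the $\mathsf{A}^g$'s is. So one must open up the proof of e-lifting rather than cite it as a black box.

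The way to do this is to recall (or reprove in one line) the core inequality behind e-lifting: if $\{M_n\}$ is an e-process on $\{\cG_n\}$ for $\cP$, dominated by a $P$-supermartingale $\{N_n\}$ with $N_0 = 1$ on $\{\cG_n\}$ where $P \in \cP$, and $h : \reals_{\ge 1} \to \reals_{\ge 0}$ is non-decreasing, then for any stopping time $\tau$ on \emph{any} filtration making $\max_{i \le \cdot} M_i$ adapted, one has $\Exp_P\, h(\max_{0 \le i \le \tau} M_i) \le \int_1^\infty \frac{h(x)}{x^2}\,dx$; this follows because $\max_{i \le \tau} M_i \le \max_{i} N_i$ and Ville's inequality gives $P(\max_i N_i \ge x) \le 1/x$, so $\Exp_P\, h(\max_i N_i) = \int_0^\infty P(h(\max_i N_i) > t)\,dt \le \int_1^\infty \frac{h(x)}{x^2}\,dx$ after a change of variables (using monotonicity of $h$ and $N_0=1$). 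Now sum this over the (nonrandom index) $g$, weighting by $\id\{P \in \cP^g\}$: for each such $g$ we bound $\Exp_P M^g_\tau^{ga} = \Exp_P\, \mathsf{A}^g(\max_{0\le i \le \tau} M_i^g) \le \int_1^\infty \frac{\mathsf{A}^g(x)}{x^2}\,dx$, so
\[
\sum_{g \in [G]} \id\{P \in \cP^g\} \cdot \Exp_P M^g_\tau^{ga} \;\le\; \sum_{g \in [G]} \int_1^\infty \frac{\mathsf{A}^g(x)}{x^2}\,dx \;=\; \int_1^\infty \frac{\sum_{g \in [G]} \mathsf{A}^g(x)}{x^2}\,dx \;\le\; G,
\]
where the last step is precisely the defining property of compound adjusters. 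Taking the supremum over $P \in \cM$ and $\tau \in \ST$ shows $(\{M_n^{ga}\})$ are global compound e-processes, and Proposition~\ref{thm:stop-ebh-comp} finishes the proof.

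The main obstacle --- and the only real subtlety --- is the one identified in the previous paragraph: the per-$g$ function $\mathsf{A}^g$ need not be an adjuster on its own, so one cannot simply invoke the quoted e-lifting theorem hypothesis-by-hypothesis and then appeal to Definition~\ref{def:processes}(\textit{4}); instead one must carry the $\int_1^\infty \mathsf{A}^g(x)/x^2\,dx$ bound through \emph{before} summing and only use the budget constraint $\sum_g \int \mathsf{A}^g/x^2 \le G$ at the end. A secondary point worth a careful sentence is that $\tau$ ranges over global stopping times $\ST$ while the dominating supermartingales $\{N_n^{P,g}\}$ live on the local filtrations $\{\cF_n^g\}$; this is harmless because Ville's inequality for $\{N_n^{P,g}\}$ is a statement about its own sample paths and the bound $\max_{i \le \tau} M_i^g \le \sup_i N_i^{P,g}$ holds pathwise regardless of which filtration $\tau$ is measurable with respect to --- this is the whole reason e-lifting produces a \emph{filtration-agnostic} guarantee in the first place. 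Everything else (adaptedness, the change-of-variables in Ville's inequality, interchanging the finite sum with expectation) is routine.
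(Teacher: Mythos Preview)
Your proposal is correct and follows exactly the paper's approach: the paper's own proof is the single sentence ``Since $\{M_n^{ga}\}$ are global compound e-processes, this follows from \cref{thm:stop-ebh-comp},'' and your argument supplies precisely the missing justification for the first clause. In particular, your observation that one must open up the e-lifting bound (carrying $\int_1^\infty \mathsf A^g(x)/x^2\,dx$ per coordinate and summing only at the end) rather than cite it per-hypothesis is the right way to handle compound adjusters, and is implicit in the paper's one-line claim.
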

\begin{proof}
   Since $\{  M_n^{ga} \}$ are global compound e-processes, this follows from \cref{thm:stop-ebh-comp}. 
\end{proof}

\section{Examples {and counterexamples}}\label{app:examples}

\subsection{Multivariate Z-test}

We present a simple example of \cref{thm:glbl} with dependence across hypotheses via the following multivariate Z-test set-up. Let the covariates $X_n \in \cX$ be non-existent (or constant) and responses $Y^g_n 
 \in \cY^g =  \mathbb R$ for all $g \in [G]$. The joint response vectors $\Y_n = (Y_n^g : g \in [G])$ are drawn i.i.d.\ from a multivariate normal distribution
 \begin{equation}
     \Y_1, \Y_2, \dots \iid \mathcal{N}(\theta, \Sigma),
 \end{equation}
 where the mean vector $\theta \in \Theta = \mathbb R^G$ and the off-diagonal entries of the covariance matrix $\Sigma$ are unknown, while the diagonal entries of $\Sigma$ is known in advance. 
 We test the null hypotheses
 \begin{equation}
     \cH_0^g: \Exp[ Y_n^g  ] = \theta^g = 0.
 \end{equation}
That is, the marginal conditional model \eqref{eqn:model} is furnished with distributions $p_g( \cdot | x, \theta) = \mathcal{N}(\theta^g, \Sigma_{gg})$, and the null sets
$\Theta^g_0 = \{ \theta \in \mathbb R^G : \theta^g = 0 \}$. The model clearly meets \cref{asn:causal}. Notably, the off-diagonal entries of $\Sigma$ induce instantaneous correlation across the $G$ coordinates. However, as we shall see, only the diagonal entries of $\Sigma$ are used in the construction of the e-processes. We define the stepwise e-value function as
\begin{equation}
    E_n^g(\theta, x, y) = \exp\left( \eta_n^g y - \frac{1}{2}  \eta_n^{g2} \Sigma_{gg} \right),
\end{equation}
where $\eta_n^g$ is a $\cG_{n-1}$-measurable ``learning rate'' parameter. By a direct calculation, \eqref{eqn:e-val-property} holds with equality. Note that the  stepwise e-value function $E_n^g(\theta, x, y) $ does \emph{not} dependent on the parameter $\theta$. Therefore, the process $\{M^g_n(\theta)\}$ defined in \eqref{eqn:mtg-theta} which is a global martingale under the ground truth $\theta^* = \theta \in \Theta_0^g$, is the same regardless of $\theta \in \Theta_0^g$. The corresponding e-process is therefore this martingale $M^g_n(\theta)$ with any $\theta \in \Theta_0^g$:
\begin{equation}
    U_n^g = M_n^g(\theta) = \exp \left\{  \sum_{i=1}^n \eta_i^g Y_i^g - \frac{ \Sigma_{gg} \sum_{i=1}^n \eta_i^{g2} }{2}   \right\}.
\end{equation}

\subsection{Multiple sequential probability ratio test}

The Gaussian example above is a special case of the simple-versus-simple sequential probability ratio test (SPRT). Let us spell out the general SPRT setup. 
Here, we allow arbitrary $\cX$ and $\cY^g$ (on which we equip a base measure and simply write it as $dy$)  but assume that the nulls are ``simple'' in the following sense:  for each $g\in [G]$, all null parameters $\theta \in \Theta^g_0$ share the same conditional probability density or mass function $p_g(\cdot | x, \theta)$ in the model \eqref{eqn:model}, denoted by $p_g(\cdot | x)$. We further specify an alternative conditional probability density or mass function $q_g(\cdot | x)$ for each $g \in [G]$. We allow the Markov-like condition \cref{asn:causal} satisfied with arbitrary cross-hypothesis dependence among $\Y | X$. Then, the stepwise e-value function $E_n^g$ takes the likelihood ratio
\begin{equation}
    E_n^g(\theta, x, y) = \frac{ q_g(y |x)  }{ p_g(y | x) },
\end{equation}
Once again, \eqref{eqn:e-val-property} holds with equality and  $E_n^g(\theta, x, y) $ does not dependent on the parameter $\theta$. The corresponding e-process therefore equals the global test martingale
$M^g_n(\theta)$ with any $\theta \in \Theta_0^g$:
\begin{equation}
    U_n^g = M_n^g(\theta) = \prod_{i=1}^n \frac{ q_g(Y_i^g |X_i)  }{ p_g( Y_i^g |  X_i ) }.
\end{equation}

\subsection{Parametric regression with universal inference}\label{sec:ui}

Even more generally, \cref{thm:glbl} applies as long as we work with a parametric model with computable and optimizable likelihood functions. Nulls need not be simple and alternatives can evolve over time to approximate the ground truth. In this case, the e-processes can be computed with the \emph{universal inference} method
 due to \cite{wasserman2020universal}. The example again allows the Markov-like condition \cref{asn:causal} satisfied with arbitrary cross-hypothesis dependence among $\Y | X$.

We assume the global parameter $\theta$ can be split into $G$ $\mathbb R^m$-valued, local parameters $(\theta^g = \phi^g(\theta): g \in [G])$ via $G$ functions $\phi^g : \Theta \to \mathbb R^m$, with the $g$\textsuperscript{th} null region being $\Phi_0^g \subseteq \mathbb R^m$; that is,
\begin{equation}
    \Theta^g_0 = \{ \theta \in \Theta : \phi^g(\theta) \in \Phi_0^g \}.
\end{equation}
For example, $\Theta$ can be $\mathbb R^{m\times G}$ and $\phi^g(\theta)$ can be the $g$\textsuperscript{th} column of the matrix $\theta$. But our assumption above is more flexible.
With a slight abuse of notations, we assume the model \eqref{eqn:model} is specified by the conditional probability density or mass functions $p_g(y|x,\theta^g)$. Then, the stepwise e-value function $E_n^g$ of universal inference is defined as the likelihood ratio
\begin{equation}
    E_n^g(\theta, x, y) = \frac{ p_g(y |x, \tilde\theta_{n-1}^g)  }{ p_g(y | x, \theta^g) },
\end{equation}
where $\tilde\theta_{n-1}^g$ is any estimator for $\theta^g$ computed from $Z_1,\dots, Z_{n-1}$, thus is $\cG_{n-1}$-measurable.   To see that \eqref{eqn:e-val-property} holds with equality again, for any $\theta \in \Theta_0^g$,
 \begin{equation}
    \int_{\cY^g} E_n^g(\theta , x, y)p_g(y|x, \theta^g) dy =  \int_{\cY^g}   p_g(y |x, \tilde\theta_{n-1}^g) dy = 1.
\end{equation}
In particular, define the maximum likelihood estimates within the null and the full models
\begin{equation}\label{eqn:mle}
    \hat\theta_n^g =  \argmax_{ \theta \in \Theta_0^g } \prod_{i=1}^n p_g(Y_i^g | X_i , \theta), \quad \tilde\theta_n^g =  \argmax_{ \theta \in \Theta^g } \prod_{i=1}^n p_g(Y_i^g | X_i , \theta).
\end{equation}
Then the e-processes $\{ U_n^g \}$ take the following simple form
\begin{equation}
    U_n^g = \prod_{i=1}^n \frac{ p_g(Y_i^g | X_i, \tilde\theta_{i-1}^g)  }{ p_g(Y_i^g | X_i, \hat\theta_{n}^g) }.
\end{equation}
Numerous generalized regression-like statistical models take this form, including the single-cell differential gene expression testing problem we mentioned as \cref{ex:scdge}. 
Let $Y_n^g$ be the RNA-seq count (i.e.\ gene expression level) of gene $g$ and cell $n$. Let $X_n \in \{0,1\}$ be the group membership (disease or control) of this cell. The counts follow the
negative binomial generalized linear model
        \begin{equation}
            Y_n^g \sim \operatorname{NB}( \text{mean}= \exp( X_n \beta^g + \gamma^g ) , \text{dispersion}=\alpha^g ).
        \end{equation}
The null hypothesis $\cH^g_0$ that ``gene $g$ is not differentially expressed between the two groups'' asserts that, in the model above,
        the coefficient $\beta^g$ equals 0. 
        
        One can then construct the universal inference e-process for $\cH_0^g$ via the aforementioned scheme with $\theta = ((\beta^g,\gamma^g) : g \in [G])$ and $\Theta_0^g = \{  \theta : \beta^g=0  \}$. The process $\{U_n^g\}$ can readily be written as 
    \begin{equation}
              U_n^g = \prod_{i=1}^n \frac{ \operatorname{NB}( Y_i^g | \text{mean}= \exp(X_i  \hat\beta_{i-1}^g + \hat\gamma_{i-1}^g ) , \text{dispersion}=\alpha^g ) }{\operatorname{NB}(Y_i^g |  \text{mean}= \exp( \hat\gamma_{n}^g ) , \text{dispersion}=\alpha^g )}.
        \end{equation}
        For this to be a global e-process, 
    \cref{asn:causal} states that each cell's counts are causally related to the past only through the cell's own group membership indicator, a reasonable assumption for the generalized linear model.


{

\subsection{Nonparametric heavy-tailed conditional mean testing}

Our final example of \cref{thm:glbl} is nonparametric, where $\cY^g = \mathbb R$ and for each $g \in [G]$ we specify a function $\mu^g: \cX \to \mathbb R$. We test the hypotheses
\begin{equation}
    \cH^g_0: \Exp[Y^g | X] \le \mu^g(X)
\end{equation}
under the finite conditional variance assumption $\Var[Y^g | X] \le v^{g}(X)$. Therefore, the parameter set {$\Theta$ indexes \emph{all} conditional laws $\Y|X$ such that \newline $\sup_{g \in [G]}( \Var[Y^g | X] -  v^{g}(X))\le 0$}, and $\Theta^g_0$ all those in $\Theta$ with $\Exp[Y^g | X] \le \mu^g(X)$.  {Again, \cref{asn:causal} is satisfied with arbitrary cross-hypothesis dependence among $\Y | X$.} We employ the following ``Catoni-style'' e-value due to \cite{WANG2023168}:
\begin{equation}
    E_n^g(\theta , x, y )= \exp( \phi(\lambda_n(y - \mu^g(x))) - \lambda_n^2 v^{g}(x)/2)
\end{equation}
where
\begin{equation}
    \phi (x) = \begin{cases} \log(1 + x + x^2/2), & x \ge 0, \\ -\log(1 - x + x^2/2), & x < 0, \end{cases}
\end{equation}
and $\lambda_n > 0$ is $\cG_{n-1}$-measurable. Again, the function $E_n^g$ does {not} dependent on the ``parameter'' $\theta$ that now lives in an infinite-dimensional space. The corresponding e-process therefore equals the supermartingale $M^g_n(\theta)$ with any $\theta \in \Theta_0^g$:
\begin{equation}
    U_n^g = M_n^g(\theta) = \exp \left\{  \sum_{i=1}^n \phi (   \lambda_i(   Y_i^g - \mu^g(X_i) ) ) - \frac{\sum_{i=1}^n \lambda_i^2 v^{g}(X_i)}{2}   \right\}.
\end{equation}
If the nulls are $\Exp[Y^g | X] \ge \mu^g(X)$ instead, one can simply replace $\phi$ with $-\phi$; and if they are $\Exp[Y^g | X] = \mu^g(X)$, use a convex combination (e.g.\ the average) of the e-process for $\Exp[Y^g | X] \le \mu^g(X)$ and the e-process for $\Exp[Y^g | X] \ge \mu^g(X)$. This example thus encompasses a wide scope of nonparametric problems including heavy-tailed linear regression.

\subsection{Failure of \cref{asn:causal}}
\label{sec:counterexample}
We now illustrate a simple counterexample where the Markovian causal condition (\cref{asn:causal}) does not hold. Our example is akin to a recent construction by \citet[Section 9.2]{dandapanthula2025multiple}, and builds on the coin-toss setting of \cref{ex:coin}. This involves $G=2$ streams of variables where the second stream ``peeks into the future'' of the first stream.

We construct two different set-ups distinguished by $d\in\{0,1\}$. Consider empty covariates $\{ X_n \}$, and the response variables for $\cH_0^1$, $\{Y^1_n\}$, an i.i.d.\ sequence of Rademacher random variables with $\Pr[Y^1 = 1] = \theta$ and $\Pr[Y^1 = - 1] = 1 -\theta$. Now, we define $Y^2_n = Y^1_{n+d}$ for all $n \ge 1$, where we recall $d$ is either 0 or 1. 
That is, the second stream, while also an i.i.d.\ Rademacher sequence, reproduces the coin tosses of the first stream if $d=0$; and foretells the upcoming coin tosses of the first stream if $d=1$. 
The statistician tests $\cH^1_0:\theta = 1/2$ using $\{Y^1_n\}$, and $\cH^2_0:\theta = 1/2$ using $\{Y^2_n\}$ 
by applying the stopped e-BH to the two local e-processes (as \cref{ex:coin})
\begin{equation}
    M^1_n = \prod_{i=1}^n \left(1 + \frac{Y^1_i}{2} \right), \quad
    M^2_n = \prod_{i=1}^n \left(1 + \frac{Y^2_i}{2} \right),
\end{equation}
with respect to the local filtrations
\begin{equation}
    \cF^1_n = \sigma(Y^1_1, \dots, Y^1_n), \quad   \cF^2_n = \sigma(Y^2_1, \dots, Y^2_n).
\end{equation}
Here, note that the global filtration $\cF_n = \cF^1_{n+d}$.

If $d=0$, the test is perfectly valid as \cref{asn:causal}, equivalent to
\begin{equation}
    Y^1_n \perp ( Y^1_1, \dots,  Y^1_{n-1}),
\end{equation}
holds due to the i.i.d.\ assumption. 

However, if $d=1$, \cref{asn:causal} is now equivalent to
\begin{equation}
    (Y^1_n, Y^1_{n+1} ) \perp  (Y^1_1, \dots,  Y^1_{n}),
\end{equation}
and is therefore not satisfied. In this case, the following global stopping time breaks the first local e-process:
\begin{equation}
    \tau = 1 + \id\{ Y^2_1  = 1 \} =  1 + \id\{ Y^1_2  = 1 \}.
\end{equation}
To see that it is a stopping time with respect to $\{\cF_n\}$, $\{ \tau = 1 \} = \{ Y^2_1 = -1 \} \in \cF_1$. However, the calculation
\begin{center}
    \begin{tabular}{|c|c|c|c|c|c|}  \hline
    $Y^1_1$ &  $M^1_1$ & $Y^1_2$ &  $M^1_2$  & $\tau$ & $M^1_{\tau}$ \\ \hline
   \multirow{2}{*}{1}  & \multirow{2}{*}{1.5} & 1 & 2.25 & 2 & 2.25  \\ 
   & & -1 & 0.75 & 1 & 1.5 \\
   \multirow{2}{*}{-1} &  \multirow{2}{*}{0.5} & 1 & 0.75 & 2 & 0.75 \\
   & & -1 & 0.25 & 1 & 0.5 \\ \hline 
\end{tabular}
\end{center}
shows that $\Exp_{\theta=1/2}(M^1_{\tau}) = 1.25 > 1$. The local e-process $\{ M^1_n \}$, therefore, is not a global e-process. Intuitively, a gambler betting heads on the coin tosses $\{ Y^1_n \}$ peeks into the future after one round of bet, only taking the next bet if it will be profitable.

It is worth remarking here, however, that the globally stopped local e-process $M^1_\tau$, while not an e-value, still satisfies $\Pr_{\theta = 1/2}(M^1_\tau \ge 1/\alpha)\le \alpha$ for any $\alpha \in (0,1)$, i.e.\ Markov's inequality  \emph{as if $\Exp_{\theta = 1/2} M^1_\tau $ were 1}. See \citet[Lemma 3]{howardcs}. That is, $1/M^1_\tau$ is a valid p-value. Recall that the e-BH procedure is defined as the BH procedure with inverted inputs \citep[Section 4.1]{wang2022false},
\begin{equation}
     \mathsf{eBH}_{\alpha}(E_1,\dots, E_G) =  \mathsf{BH}_{\alpha}(E_1^{-1},\dots, E_G^{-1}).
\end{equation}
Therefore, if one applies the stopped e-BH procedure to local e-processes at a global stopping time $\tau$,
\begin{equation}
     \mathsf{eBH}_{\alpha}(M^g_\tau : g \in [G]),
\end{equation}
one is essentially applying the BH procedure to $G$ p-values,
\begin{equation}
     \mathsf{BH}_{\alpha}(1/M^g_\tau : g \in [G]),
\end{equation}
therefore potentially facing the same possible FDR inflation that can at worst be $1 + 2^{-1} + \dots + G^{-1} \approx \log G$, as the one arising in applying BH to arbitrarily dependent p-values \citep[Theorem 1]{wang2022false}.

The complication above does not happen as long as the time index $n$ is a bona fide representation of the chronological evolution of the experiment, and no retrocausality is allowed. 
Local e-processes may otherwise fail to be global e-processes.
A remedy, we note, is the filtration-agnostic adjustment by \cite{choe2024combining} discussed in \cref{sec:adj}.
}

\section{Omitted proofs}

\subsection{Proof of \cref{thm:glbl}}\label{sec:pf-glbl}
\begin{proof} Suppose the ground truth $\theta^* \in \Theta^g_0$. Since the function $E_n^g$ is $\cG_{n-1}$-measurable, there exists a \emph{non-random} function $f_E$ such that
    \begin{equation}
         E_n^g(\theta,X_n, Y_n^g) = f_E(\theta,X_n, Y_n^g, Z^{n-1})
    \end{equation}
    where $Z^{n-1} = (Z_i :1 \le i \le n-1)$. Now we have
    \begin{align}
        & \Exp_{\theta^*}[ E_n^g(\theta^*,X_n, Y_n^g) | \cF_{n-1}]
        \\
        = & \Exp_{\theta^*}[  f_E(\theta^*,X_n, Y_n^g, Z^{n-1}) |   X_n, Z^{n-1}   ]
        \\
          = & \int_{\cY^g} f_E(\theta^*,X_n, y, Z^{n-1}) p_g(dy|X_n, \theta^*) 
        \\
        = & \int_{\cY^g} E_n^g(\theta^*,X_n, y) p_g(dy|X_n, \theta^*) \le 1.
    \end{align}
    In the calculation above, the Markovian \cref{asn:causal} is implicitly invoked at the second equality. On the one hand, $p_g(\cdot | \cdot,  \theta^*)$ specifies the conditional distribution of $Y^g_n | X_n$. On the other hand,  \cref{asn:causal} implies $Y^g_n \perp Z^{n-1} | X_n$, therefore $Y^g_n | X_n, Z^{n-1}$ has the same conditional distribution as $Y^g_n | X_n$, which is $p_g(\cdot | \cdot,  \theta^*)$. This makes the second equality valid.

    We have shown that under $\theta^*$, the conditional expectation given $\cF_{n-1}$ of $E_n^g(\theta^*,X_n, Y_n^g)$ is at most 1. Since $E_n^g(\theta^*,X_n, Y_n^g)$ is itself $\cF_n$-measurable, the product $M_n^g(\theta^*)$ is therefore a supermartingale on $\{\cF_n \}$. This concludes the proof.
\end{proof}

\subsection{Independent local filtrations}\label{app:indp}

In this section, we formally prove that when there is no cross-hypothesis dependence, local e-processes are always global e-processes.

\begin{proposition}\label{prop:indp}
    Following the terminology in \cref{def:processes}, if the local $\sigma$-algebras at infinity $( \cF_\infty^g  : g \in [G])$ are independent, local e-processes $(\{ M_n^g \} : g \in [G])$ are global.
\end{proposition}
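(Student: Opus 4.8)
The plan is to show that a $g$-local e-process $\{M_n^g\}$, when stopped at an arbitrary \emph{global} stopping time $\tau \in \ST$, still has expectation at most $1$ under any $P \in \cP^g$, thereby verifying condition \textit{3} of \cref{def:processes}. The key structural fact I would exploit is the independence of the $\sigma$-algebras at infinity: under the hypothesis, $\cF_\infty^g$ is independent of $\cF_\infty^{-g} := \sigma(\cF_\infty^h : h \neq g)$, and the global filtration satisfies $\cF_n \subseteq \cF_\infty^g \vee \cF_\infty^{-g}$.

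First I would recall from the second equivalent definition of a local e-process that there exists, for each $P \in \cP^g$, a nonnegative $P$-supermartingale $\{N_n^P\}$ on $\{\cF_n^g\}$ with $N_0^P = 1$ and $N_n^P \ge M_n^g$ for all $n$. It suffices to show $\{N_n^P\}$ is also a supermartingale on the \emph{global} filtration $\{\cF_n\}$; then optional stopping (with the standard nonnegativity/Fatou argument for unbounded stopping times) gives $\Exp_P[M_\tau^g] \le \Exp_P[N_\tau^P] \le N_0^P = 1$ for every $\tau \in \ST$.

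So the crux is the supermartingale property under the enlarged filtration, i.e.\ $\Exp_P[N_{n}^P \mid \cF_{n-1}] \le N_{n-1}^P$. Writing $\cF_{n-1} = \cF_{n-1}^g \vee \cF_{n-1}^{-g}$ where $\cF_{n-1}^{-g} := \sigma(\cF_{n-1}^h : h \neq g)$, I would argue that conditioning $N_n^P$ (which is $\cF_n^g$-measurable, hence $\cF_\infty^g$-measurable) on $\cF_{n-1}^g \vee \cF_{n-1}^{-g}$ reduces to conditioning on $\cF_{n-1}^g$ alone. This is the standard lemma that if $\cA, \cB, \cC$ are $\sigma$-algebras with $\cB$ independent of $\cA \vee \cC$, and $Y$ is $\cA$-measurable and integrable, then $\Exp[Y \mid \cA' \vee \cB] = \Exp[Y \mid \cA']$ for $\cA' \subseteq \cA$ — here taking $\cA = \cF_\infty^g$, $\cA' = \cF_{n-1}^g$, $\cB = \cF_{n-1}^{-g}$, and noting $\cB$ is independent of $\cA = \cF_\infty^g$ (a fortiori of $\cA' \vee$ anything inside $\cA$), but one must be slightly careful: we need $\cF_{n-1}^{-g}$ independent of $\cF_\infty^g$, which follows from independence of $\cF_\infty^g$ and $\cF_\infty^{-g} \supseteq \cF_{n-1}^{-g}$. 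Thus $\Exp_P[N_n^P \mid \cF_{n-1}] = \Exp_P[N_n^P \mid \cF_{n-1}^g] \le N_{n-1}^P$, using the local supermartingale property in the last step.

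The main obstacle I anticipate is purely measure-theoretic bookkeeping rather than conceptual: stating and applying the conditional-independence lemma cleanly (the "useful for future independence" lemma, e.g.\ as in Kallenberg), and handling the optional stopping at a possibly unbounded $\tau$ via $N_\tau^P = \lim_n N_{\tau \wedge n}^P$ together with Fatou's lemma, since nonnegative supermartingales converge a.s.\ but are not uniformly integrable in general. A secondary subtlety worth a sentence is that $\{M_n^g\}$ being $\{\cF_n^g\}$-adapted automatically makes it $\{\cF_n\}$-adapted, so the adaptedness half of condition \textit{3} is immediate; only the stopped-expectation bound requires the independence hypothesis.
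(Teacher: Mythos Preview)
Your proposal is correct and follows essentially the same route as the paper: both invoke the supermartingale characterization of e-processes, then show the local dominating supermartingale $\{N_n^P\}$ remains a supermartingale on the global filtration by arguing that $\Exp_P[N_n^P \mid \cF_{n-1}^g \vee \cF_{n-1}^{-g}] = \Exp_P[N_n^P \mid \cF_{n-1}^g]$ via the independence of $\cF_\infty^g$ and $\cF_\infty^{-g}$. The paper states this conditional-expectation identity by lifting an elementary lemma about events (\cref{lem:3events}) to $\sigma$-algebras, whereas you cite the $\sigma$-algebra version directly and additionally spell out the Fatou argument for unbounded stopping times that the paper leaves implicit in its stated equivalence of e-process definitions; these are cosmetic differences only.
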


\begin{proof} Our proof is based on the two equivalent definitions of e-processes mentioned in \cref{sec:seq}.
    Fix a $g \in [G]$. Since $\{ M_n^g \} $ is an e-process for $\cP^g$ on $\{ \cF^g_n \}$, for every $P \in \cP^g$, it is upper bounded by a nonnegative supermartingale $\{N_n^{P}\}$ on $\{ \cF^g_n \}$ with  $N_0^{P} = 1$. It suffices to prove it is a supermartingale on $\{ \cF_n \}$ as well. This follows from
    \begin{equation}
        \Exp[ N_{n+1}^{P}  |  \cF_n ] =  \Exp[ N_{n+1}^{P}  |  \cF_n^{g}; \cF_n^{h} : h \in [G] \setminus \{ g \} ]  \stackrel{(*)}{=}   \Exp[ N_{n+1}^{P}  |  \cF_n^{g}] \le N_n,
    \end{equation}
    where the equality $(*)$ follows from lifting \cref{lem:3events} to $\sigma$-algebras: $\sigma(\cF_n^{h} : h \in [G] \setminus \{ g \})$ being independent from $\sigma( N_{n+1}^P, \cF_n^g  )$.
\end{proof}

\begin{lemma}\label{lem:3events}
    Let $A, B, C$ be events in a probability space such that $B$ is independent from $\sigma(A, C)$, and $\Pr[B], \Pr[C] > 0$. Then $\Pr[A| B \cap C] = \Pr[A|C]$.
\end{lemma}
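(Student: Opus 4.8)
The statement to prove is \cref{lem:3events}: if $B$ is independent from $\sigma(A,C)$ and $\Pr[B],\Pr[C]>0$, then $\Pr[A\mid B\cap C] = \Pr[A\mid C]$. The natural approach is a direct computation with the definition of conditional probability, using independence to factor out $\Pr[B]$.

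\textbf{Key steps.}
First I would note that $B$ independent from $\sigma(A,C)$ in particular means $B$ is independent from the event $A\cap C$ (which lies in $\sigma(A,C)$) and from the event $C$. Hence $\Pr[B\cap C] = \Pr[B]\Pr[C] > 0$, so $\Pr[A\mid B\cap C]$ is well-defined. Second, I would write
\begin{equation}
  \Pr[A\mid B\cap C] = \frac{\Pr[A\cap B\cap C]}{\Pr[B\cap C]} = \frac{\Pr[B]\,\Pr[A\cap C]}{\Pr[B]\,\Pr[C]},
\end{equation}
where the numerator uses that $B$ is independent from $A\cap C\in\sigma(A,C)$, and the denominator uses that $B$ is independent from $C$. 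Third, cancelling $\Pr[B]>0$ gives $\Pr[A\mid B\cap C] = \Pr[A\cap C]/\Pr[C] = \Pr[A\mid C]$, which is the claim.

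\textbf{Main obstacle.}
There is essentially no obstacle here — this is a routine measure-theoretic fact. The only points requiring minor care are (i) verifying $\Pr[B\cap C]>0$ so that the conditional probability is legitimate, which follows immediately from $\Pr[B]\Pr[C]>0$, and (ii) being precise that ``$B$ independent from $\sigma(A,C)$'' yields independence of $B$ from the particular events $A\cap C$ and $C$ individually, both of which are members of the $\sigma$-algebra $\sigma(A,C)$. Once these are observed, the identity is a two-line cancellation.
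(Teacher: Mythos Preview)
Your proposal is correct and follows exactly the same direct computation as the paper's proof: expand $\Pr[A\mid B\cap C]$ by definition, factor $\Pr[B]$ out of numerator and denominator via independence of $B$ from $A\cap C$ and from $C$, and cancel. The only addition you make is explicitly checking $\Pr[B\cap C]>0$, which the paper leaves implicit.
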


\begin{proof}
    $\Pr[A| B \cap C] = \frac{\Pr[A\cap B \cap C]}{\Pr[B \cap C]} = \frac{\Pr[A \cap C] \Pr[B]}{\Pr[B] \Pr[C]} =  \frac{\Pr[A \cap C] }{\Pr[C]} = \Pr[A|C]$.
\end{proof}

\end{document}